\newtheorem{theorem}{Theorem}
\newtheorem{lemma}{Lemma}
\newtheorem{definition}{Definition}
\newcommand{\nat}{\mathbb{N}}
\newcommand{\timeSet}{T}
\newcommand{\nodeSet}{V}
\newcommand{\epochSet}{E}
\newcommand{\boxSet}{B}
\newcommand{\initialSet}{X}
\newcommand{\configSet}{C}
\newcommand{\state}{S}
\newcommand{\aFun}{F}
\newcommand{\async}{\delta}
\newcommand{\accordant}{accordant}
\newcommand{\UD}{\"Uresin \& Dubois}
\newcommand{\accSet}{A^p}
\begin{document}

\begin{frontmatter}

\title{Dynamic asynchronous iterations}

\author{Matthew L. Daggitt*}
\author{Timothy G. Griffin}
\address{Department of Computer Science and Technology, University of Cambridge, UK}

\begin{abstract}
Many problems can be solved by iteration  by multiple participants (processors, servers, routers etc.). Previous mathematical models for such asynchronous iterations assume a single function being iterated by a fixed set of participants. We will call such iterations \emph{static} since the system's configuration does not change. However in several real-world examples, such as inter-domain routing, both the function being iterated and the set of participants change frequently while the system continues to
function. In this paper we extend \UD{}'s work on static iterations to develop a model for this class of \emph{dynamic} or \emph{always on} asynchronous iterations. We explore what it means for such an iteration to be implemented correctly, and then prove two different conditions on the set of iterated functions that guarantee the full asynchronous iteration satisfies this new definition of correctness. These results have been formalised in Agda and the resulting library is publicly available.
\end{abstract}

\begin{keyword}
Asynchronous computation, Iteration, Fixed points, Formal verification, Agda
\end{keyword}

\end{frontmatter}

\section{Introduction}

Let $\state$ be a set. Iterative algorithms aim to find a fixed point $x^*$ for some function ${\aFun : \state \rightarrow \state}$ by starting from an initial state $x \in \state$ and calculating the sequence:
\begin{equation*}
x,\ \aFun(x),\ \aFun^2(x), \aFun^3(x),\ ...
\end{equation*}
If a number of iterations $k^*$ is found such that $\aFun^{k^*}(x) = \aFun^{k^*+1}(x)$ then $\aFun^{k^*}(x)$ is a fixed point $x^*$. Whether or not there exists such a $k^*$ depends on both the properties of the iterated function $\aFun$ and the initial state chosen $x$. It should be noted that this paper is only interested in functions $\aFun$ which converge to a unique fixed point, i.e. the same $x^*$ is reached no matter which initial state $x$ the iteration starts from.

In a distributed version of the iteration, both the set $\state$ and the function $\aFun$ are assumed to be decomposable into $n$ parts:
\begin{align*}
\state &= \state_1 \times \state_2 \times ... \times \state_n \\
\aFun  &= (\aFun_1 , \aFun_2, ..., \aFun_n)
\end{align*}
where $\aFun_i : \state \rightarrow \state_i$ computes the $i^{th}$ component of the new state. Each node $i$ repeatedly iterates $\aFun_i$ on its local view of the current state of the iteration, and propagates its stream of updated values to other nodes so that they may incorporate them in their own iteration. In an \emph{asynchronous} distributed iteration, the timings between nodes are not actively synchronised. A formal model, $\async$, for such an asynchronous iteration is described in Section~\ref{sec:static-model}.

Frommer~\&~Syzld~\cite{frommer00asynchronous} provide a survey of the literature describing when such asynchronous iterations are guaranteed to converge to a unique fixed point. One of the unifying features of these results is that they only require conditions on the function $\aFun$ and hence users may prove an asynchronous iteration converges without ever directly reasoning about  unreliable communication or asynchronous event orderings. Applications of these results include routing~\cite{chau06routing,daggitt18convergence,duco03pathalgebra}, programming language design~\cite{edwards03semantics}, peer-to-peer protocols~\cite{ko08ptp}, and numerical simulations~\cite{chau05numerical}. Other recent applications of asynchronous iterations include \cite{wolfson2019modeling, magoules2021asynchronous, magoules2018asynchronous}, while \cite{spiteri2020parallel} and~\cite{bahi2007parallel} provide useful surveys of asynchronous iterations in general.

However there are two main drawbacks to the models used in the current literature. Firstly, they assume the set of participating nodes remains constant over time. While this may be reasonable when modelling an iterative process run on a multi-core computer, it is unrealistic when reasoning about truly distributed ``always on'' protocols such as routing and consensus algorithms. For example the global Border Gateway Protocol that coordinates routing in the internet has been ``on'' since the early 1990's and has grown from a few dozen routers to millions. During that time the set of participating routers has been completely replaced many times over. The second problem is that the models assume that the function $\aFun$ being iterated remains constant over time. This may not be the case if it depends on some process external to the iteration (e.g. link latencies in routing) or on the set of participants (e.g. resource allocation/consensus/routing algorithms). 

This paper will therefore use the term \emph{static} to refer to the asynchronous iterations previously described in the literature and \emph{dynamic} to refer to this new class of asynchronous iterations in which the set of participating nodes and function being iterated may change over time.

When applying the results in the literature to always-on algorithms, it is common for prior work to either informally argue or implicitly assume that the correctness of a dynamic iteration is an immediate consequence of the correctness of a infinite sequence of suitable static iterations. This line of reasoning is:
\begin{itemize}
\item implicitly argued in Section~4.2 of \cite{chau06routing}.
\item explicitly argued in Section~3.2 of \cite{daggitt18convergence}.
\item implicitly argued in Section~2.4 of \cite{duco03pathalgebra}.
\item discussed and implicitly argued in point (b) of Section 5 in \cite{bertsekas89survey}.
\end{itemize}
The reasoning runs that a dynamic iteration is really a sequence of static iterations, where each new static iteration starts from the final state of the previous static iteration. However this argument is incorrect, as it does not take into account that messages may be shared between the different static iterations in the sequence. For example if node 1 fails, it may still have messages in flight that node 2 will receive in the next static iteration. Not only may this message prevent convergence in the next iteration, the model in the existing literature has no way of even representing messages arriving from nodes that are not participating during the current static iteration.

This paper therefore proposes a new, more general model that can be used to reason about dynamic iterations over both continuous and discrete data. Section~\ref{sec:static-model} of the paper describes one of the most commonly used static models, and discusses some of the surrounding literature. Section~\ref{sec:dynamic-model} then presents our new generalised model for dynamic iterations, and discusses what it means for a dynamic iteration to be ``correct''. Next, Section~\ref{sec:convergent-results} proves two different conditions for a dynamic asynchronous iteration to satisfy this definition of correctness. Importantly, and as with the previous static results of \UD, these conditions only constrain the \emph{synchronous} behaviour of the dynamic system. This means that users of our theorems can prove the correctness of their asynchronous algorithms by purely synchronous reasoning.  Section~\ref{sec:agda} then briefly describes the formalisation of the results in Agda and their application to inter-domain routing protocols. Finally, Section~\ref{sec:conclusion} discusses our concluding thoughts and possible directions for future work.
\section{Static asynchronous iterations}
\label{sec:static-model}

\subsection{Model}

A mathematical model for static asynchronous iterations was standardised by work in the 1970s and 80s~\cite{bertsekas89survey, baudet76asynchronous, uresin86generalized}. The notation and terminology used here is taken from the recent paper~\cite{anonymised} which in turn is based on that used by \UD~\cite{uresin90parallel}.

Assume that the set of times $\timeSet$ is a discrete linear order. Each point in time marks the occurrence of events of interest: for example a node computing an update or a message arriving at a node. The set of times can be represented by~$\mathbb{N}$ but for notational clarity $\timeSet$ will be used. Additionally let $\nodeSet = \{1,2,...,n\}$ be the set of nodes that are participating in the computation.

\begin{definition}[Static schedule]
A static schedule consists of a pair of functions:
\begin{itemize}
\item $\alpha : \timeSet \rightarrow 2^\nodeSet$, the \emph{activation function}, where $\alpha(t)$ is the set of nodes which activate at time~$t$.

\item $\beta : \timeSet \times \nodeSet \times \nodeSet \rightarrow \timeSet$, the \emph{data flow function}, where $\beta(t,i,j)$ is the time at which the latest message node $i$ has received from node $j$ at time $t$ was sent by node $j$.
\end{itemize}
such that:
\begin{enumerate}[label=(SS\arabic*),leftmargin=1.5cm]
\item $\forall i,j,t : \: \beta(t+1,i,j) \leq t$
\label{ass:static-sched-causality}
\end{enumerate}
\end{definition}
The function~$\alpha$ describes when nodes update their values, and the function~$\beta$ tracks how the resulting information moves between nodes. Assumption \ref{ass:static-sched-causality} enforces causality by stating that information may only flow forward in time. Note that this definition does not forbid the data flow function $\beta$ from delaying, losing, reordering or even duplicating messages (see Figure~\ref{fig:schedule}). Prior to recent work~\cite{anonymised}, static schedules were assumed to have two additional assumptions that guaranteed every node continued to activate indefinitely and that every pair of nodes continued to communicate indefinitely. 

\begin{figure}
\centering
\begin{tikzpicture}
\def\a{0}
\def\b{0.7}
\def\c{2}
\def\d{4.5}
\def\e{6}
\def\f{6.5}
\def\g{10}
\def\h{10.5}

\def\braceLabelY{2.7}
\def\braceY{2.25}
\def\brcD{0.05}

\def\th{-0.5}
\def\bh{-1}
\def\ih{2}
\def\jh{0}

\def\tickHeight{0.3}
\def\labelY{-0.03}

\tikzstyle{bracesU}=[thick, decorate, decoration={brace,amplitude=6pt,raise=0pt}]
\tikzstyle{bracesD}=[thick, decorate, decoration={brace,amplitude=6pt,raise=0pt,mirror}]

\draw [thick,dashed,->] (0.7,2) -- (11,2);
\draw [thick,dashed,->] (0.7,0) -- (11,0);
\node at (0,\ih) {Node $i$};
\node at (0,\jh) {Node $j$};

\node at (0,\th) {Time $t$};
\node at (1,\th) {1};
\node at (2,\th) {2};
\node at (3,\th) {3};
\node at (4,\th) {4};
\node at (5,\th) {5};
\node at (6,\th) {6};
\node at (7,\th) {7};
\node at (8,\th) {8};
\node at (9,\th) {9};
\node at (10,\th){10};

\node at (0,\bh) {$\beta(t,i,j)$};
\node at (1,\bh) {0};
\node at (2,\bh) {0};
\node at (3,\bh) {2};
\node at (4,\bh) {1};
\node at (5,\bh) {1};
\node at (6,\bh) {1};
\node at (7,\bh) {1};
\node at (8,\bh) {7};
\node at (9,\bh) {8};
\node at (10,\bh){7};

\draw [->](1,\jh) -- (4,\ih);
\draw [->](2,\jh) -- (3,\ih);

\draw [->](5,\jh) -- (6,1);

\draw [->](7,\jh) -- (8,\ih);
\draw [->](7.5,1) -- (10,\ih);
\draw [->](8,\jh) -- (9,\ih);


\draw[bracesU] (\b+\brcD,\braceY) -- (\d-\brcD,\braceY);
\draw[bracesU] (\d+\brcD,\braceY) -- (\f-\brcD,\braceY);
\draw[bracesU] (\f+\brcD,\braceY) -- (\h-\brcD,\braceY);

\node[align=center] at ({(\b+\d)/2},\braceLabelY){Messages reordered};
\node[align=center] at ({(\d+\f)/2},\braceLabelY){Message lost};
\node[align=center] at ({(\f+\h)/2},\braceLabelY){Message duplicated};
\end{tikzpicture}
\caption{Behaviour of the data flow function
$\beta$. Messages from node $j$ to node $i$ may be reordered, lost or even duplicated. The only constraint is that every message must arrive after it was sent. Reproduced from~\cite{anonymised}}
\label{fig:schedule}
\end{figure}
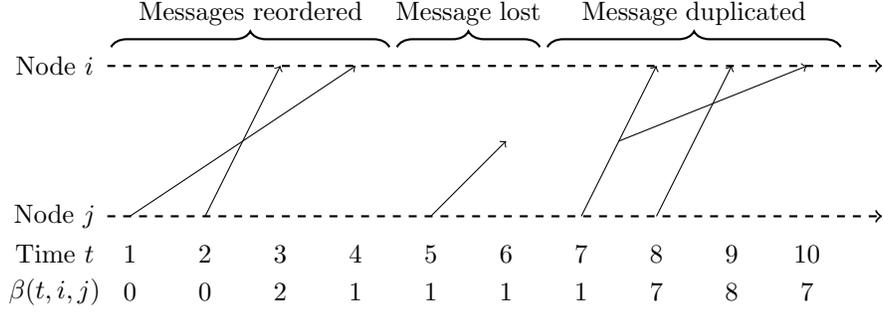

\begin{definition}[Static asynchronous state function]
Given a static schedule $(\alpha, \beta)$ the static asynchronous state function, $\async : \timeSet \rightarrow \state \rightarrow \state$, is defined as follows:
\begin{align*}
\async^t_i(x) &= \begin{cases}
x_{i} & \text{if $t = 0$}\\
\async^{t-1}_i(x) & \text{else if $i \notin \alpha(t)$} \\
\aFun_i(\async^{\beta(t,i,1)}_1(x), \async^{\beta(t,i,2)}_2(x), ... , \async^{\beta(t,i,n)}_n(x)) & \text{otherwise}
\end{cases}
\end{align*}
where $\async^t_i(x)$ is the state of node $i$ at time $t$ when starting from state $x$.
\end{definition}
At time $0$ the iteration is in the initial state $x$. At subsequent times $t$ if node~$i$ is not in the set of active nodes then its state remains unchanged. Otherwise if node $i$ is in the active set of nodes it applies its update function $F_i$ to its \emph{current view} of the global state. For example $\async^{\beta(t,i,1)}_1(x)$ is the state of node $1$ at the time of departure of the most recent message node $i$ has received from node $1$ at time $t$.

\subsection{Correctness}

In order to precisely define when an asynchronous iteration is expected to converge, it is first necessary to discuss what sort of schedules allow an asynchronous iteration to make progress. As mentioned earlier, previous models made the simplifying assumption that every node activates an infinite number of times and every pair of nodes continue to communicate indefinitely. This essentially says that the schedule is well-behaved forever. In contrast \cite{anonymised} built upon the work of \UD{} and their concept of \emph{pseudocycles} and relaxed this condition to only require that schedules must be well-behaved for a finite period of time. This distinction will be important in the dynamic model described later in Section~\ref{sec:dynamic-model}, as a dynamic iteration will only have a finite period of time to converge before either the participants or the function being iterated changes.

\begin{definition}[Static activation period]
A period of time $[t_1,t_2]$ is an activation period for node $i$ if there exists a time $t \in [t_1,t_2]$ such that $i \in \alpha(t)$.
\end{definition}

\begin{definition}[Static expiry period]
A period of time $[t_1,t_2]$ is an expiry period for node $i$ if for all nodes $j$ and times $t \geq t_2$ then $t_1 \leq \beta(t,i,j)$.
\end{definition}
Therefore after an activation period node $i$ is guaranteed to activate at least once. In contrast after an expiry period the node is guaranteed to use only data generated after the start of the expiry period. In other words, all messages in flight to node $i$ at time $t_1$ have either arrived or been lost by time $t_2$.

\begin{definition}[Static pseudocycle]
A period of time $[t_1,t_2]$ is a pseudocycle if for all nodes $i$ there exists a time $t \in [t_1,t_2]$ such that $[t_1,t]$ is an expiry period for node $i$ and $[t,t_2]$ is an activation period for node $i$.
\end{definition}
The term ``pseudocycle'' refers to the fact that during such a period of time the asynchronous iteration will make at least as much progress as that of a single step of the synchronous iteration. This statement will be made formal later on by Lemma~\ref{lem:pseudocycle} in Section~\ref{sec:aco-implies-convergent}. When we informally say that a period of time contains $k$ pseudocycles, we implicitly mean $k$ \textit{disjoint} pseudocycles.

Using the definition of a pseudocycle, it is now possible to define what it means for an asynchronous iteration to converge for schedules that are only well-behaved for a finite amount of time.

\begin{definition}[Static convergence]
The static asynchronous iteration converges over a set of initial states ${\initialSet = \initialSet_1 \times \initialSet_2 \times \ldots \times \initialSet_n} \subseteq \state$ if:
\begin{enumerate}
\item there exists a fixed point $x^*$ for $\aFun$ and a number of iterations $k^*$.
\item for every starting state $x \in X$ and schedule containing at least $k^*$ pseudocycles then there exists a time $t_1$ such that for all $t_2 \geq t_1$ then $\async^{t_2}(x) = x^*$.
\end{enumerate}
\end{definition}

\subsection{Results}
\label{sec:static-results}

The survey paper by Frommer~\&~Syzld~\cite{frommer00asynchronous} provides an overview of the convergence results in the literature for this and other related models. Much of the work has been motivated by iterative algorithms in numerical analysis and consequently many of the proofs of convergence assume that the set $\state$ is equipped with a dense ordering. Unfortunately in fields such as routing, consensus algorithms and others, the set~$\state$ is discrete, and so many of the more common results are inapplicable. However in the late 1980s \UD{}~\cite{uresin90parallel} came up with one of the first conditions for the convergence of discrete asynchronous iterations. Here we use the relaxed version of the conditions as proposed in~\cite{anonymised}.

\begin{definition}[Static ACO]
\label{def:static-aco}
A function $\aFun$ is an asynchronously contracting operator (ACO) if there exists a sequence of sets $\boxSet(k) = \boxSet(k)_1 \times \boxSet(k)_2 \times ... \times \boxSet(k)_n$ for $k \in \mathbb{N}$ such that:
\begin{enumerate}[label=(SA\arabic*),leftmargin=1.5cm]
\item $\forall x \in \state: x \in \boxSet(0) \Rightarrow \aFun(x) \in \boxSet(0)$.
\label{ass:static-aco-closed}
\item $\forall k \in \nat, x \in \state : x \in \boxSet(k) \Rightarrow \aFun(x) \in \boxSet(k+1)$.
\label{ass:static-aco-progress}
\item $\exists k^* \in \nat, x^* \in \state : \forall k \in \nat: k^* \leq k \Rightarrow \boxSet(k) = \{x^*\}$.
\label{ass:static-aco-finish}
\end{enumerate} 
\end{definition}

\begin{theorem}
\label{thm:static-aco-implies-converges}
If function $\aFun$ is an ACO then $\async$ converges deterministically over the set~$B(0)$.
\end{theorem}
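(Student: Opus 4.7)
The plan is to show by induction on $k$ that after $k$ pseudocycles every component of the asynchronous state lies in $B(k)_i$, and then to apply \ref{ass:static-aco-finish} to conclude convergence to $x^*$.

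First I would establish a preliminary invariance lemma: if $x \in B(0)$ then $\async^t_i(x) \in B(0)_i$ for every time $t$ and node $i$. This is a direct induction on $t$ that handles the three cases in the definition of $\async$ separately, using $x \in B(0)$ in the base case, the inductive hypothesis in the inactive case, and \ref{ass:static-aco-closed} together with the inductive hypothesis applied at the earlier times $\beta(t,i,j)$ in the activation case.

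Next, fix $k^*$ disjoint pseudocycles in the schedule with right endpoints $T_1 < T_2 < \cdots < T_{k^*}$ and set $T_0 = 0$. I would prove by induction on $k$ that for every $t \geq T_k$ and every node $i$, $\async^t_i(x) \in B(k)_i$. The base case $k = 0$ is the invariance lemma. For the inductive step, fix $i$ and let $s \in [T_k, T_{k+1}]$ be the splitting point such that $[T_k, s]$ is an expiry period for $i$ and $[s, T_{k+1}]$ is an activation period for $i$. The expiry property gives $\beta(t, i, j) \geq T_k$ for every $t \geq s$ and every $j$, hence by the inductive hypothesis $\async^{\beta(t,i,j)}_j(x) \in B(k)_j$. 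The activation property supplies some $t' \in [s, T_{k+1}]$ at which $i$ activates, and \ref{ass:static-aco-progress} then yields $\async^{t'}_i(x) \in B(k+1)_i$. A secondary induction on $t \geq t'$ extends this membership to all later times: when $i$ is inactive its state is preserved from the previous step, and when $i$ reactivates its inputs still come from times $\geq T_k$ and so lie in $B(k)$, so \ref{ass:static-aco-progress} again gives a value in $B(k+1)_i$.

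Iterating the outer induction up to $k = k^*$ and invoking \ref{ass:static-aco-finish} yields $\async^t_i(x) = x^*_i$ for every $t \geq T_{k^*}$ and every $i$, which is exactly the required convergence with witnessing time $T_{k^*}$ and fixed point $x^*$; determinism is immediate because $x^*$ does not depend on the starting state $x$ or on the schedule. The main obstacle I anticipate is the secondary induction that propagates membership in $B(k+1)_i$ past the first post-expiry activation to every later time, since the pseudocycle only guarantees a single activation inside its activation window and later activations must be handled by re-appealing to the expiry bound together with the outer inductive hypothesis.
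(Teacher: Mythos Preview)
Your argument is correct and is essentially the standard one. The paper itself does not give a proof of this statement at all: it simply cites \cite{uresin90parallel} and \cite{anonymised}. However, the paper does spell out the analogous argument in detail for the \emph{dynamic} generalisation (Theorem~\ref{thm:aco-implies-convergent}), and your structure---an invariance lemma for $B(0)$, then an induction on the number of completed pseudocycles using expiry to control the inputs and activation plus \ref{ass:static-aco-progress} to advance the box, with a secondary induction to propagate membership in $B(k+1)_i$ forward in time---matches that development closely. The paper factors the work slightly differently, separating ``stability'' lemmas (once in box $k$, stay there) from ``progress'' lemmas (activation and expiry each advance one ingredient), whereas you fold stability into the secondary induction; both organisations yield the same proof.
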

\begin{proof}
See \cite{uresin90parallel} \& \cite{anonymised}. 
\end{proof}
\noindent The advantage of the ACO conditions is that they are independent of both $\async$ and the schedule, and so proving that $\async$ converges only requires reasoning about the function $\aFun$.

The conditions require that the state space $\state$ can be divided into a series of nested boxes $\boxSet(k)$ where every application of $\aFun$ moves the state into the next box, and eventually a box $\boxSet(k^*)$ is reached that only contains a single element. See Figure~\ref{fig:aco} for a visualisation. 
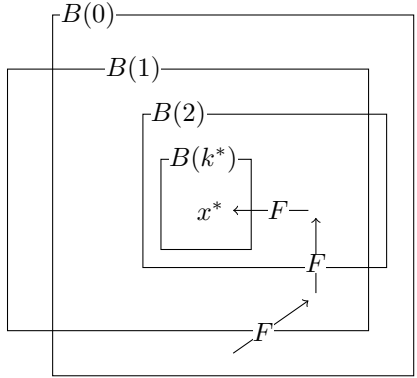
\begin{figure}
\centering
\begin{tikzpicture}

\def\s{1.2}
\def\dx{6}
\def\tx{0.1}

\draw (0*\s,0*\s) 	rectangle (4*\s,4*\s);
\draw (-0.5*\s,0.5*\s) rectangle (3.5*\s,3.4*\s);
\draw (1.0*\s,1.2*\s) rectangle (3.7*\s,2.9*\s);
\draw (1.2*\s,1.4*\s) rectangle (2.2*\s,2.4*\s);

\node[anchor=west,fill=white,inner sep=0pt] at (0*\s+\tx,4*\s) {$\boxSet(0)$};
\node[anchor=west,fill=white,inner sep=0pt] at (0.5*\s+\tx,3.4*\s) {$\boxSet(1)$};
\node[anchor=west,fill=white,inner sep=0pt] at (1.0*\s+\tx,2.9*\s) {$\boxSet(2)$};
\node[anchor=west,fill=white,inner sep=0pt] at (1.2*\s+\tx,2.4*\s) {$\boxSet(k^*)$};

\node at (2.1,2.2) {$x^*$};
\draw [->] (3.4,2.2) -- (2.4,2.2) node[pos=0.4, fill=white,inner sep=0pt] 	{$\aFun$};
\draw [->] (3.5,1.1) -- (3.5,2.1) node[pos=0.4, fill=white,inner sep=0pt] 	{$\aFun$};
\draw [->] (2.4,0.3) -- (3.4,1.0) node[pos=0.4, fill=white,inner sep=0pt] 	{$\aFun$};
\end{tikzpicture}
\caption{If $\aFun$ is an ACO then the space $\state$ can be divided up into a series of boxes $\boxSet$. Note that this figure is a simplification, as each set $\boxSet(k)$ is decomposable into $\boxSet(k)_1 \times ... \times \boxSet(k)_n$ and so in reality the diagram should be $n$ dimensional.}
\label{fig:aco}
\end{figure}
The reason why these conditions guarantee asynchronous convergence, rather than merely synchronous convergence, is that each box must be decomposable over each of the $n$ nodes. Therefore the operator is always contracting even if every node hasn't performed the same number of updates locally. Note that Theorem~\ref{thm:static-aco-implies-converges} only guarantees $\async$ will converge from states in the initial set $\boxSet(0)$. Hence $\boxSet(0)$ can be thought of as a basin of attraction~\cite{milnor85attractor}.

In practice the set of boxes $\boxSet$ can be difficult and non-intuitive to construct, as they must be explicitly centered around the fixed point whose existence may not even be immediately obvious. \UD{} recognised this and provided several other stronger conditions that are sufficient to construct an ACO. An alternative set of equivalent conditions was originally described by Gurney~\cite{gurney17ultrametrics}. As with the ACO conditions, these conditions were relaxed by~\cite{anonymised} and the latter version is now presented.

\begin{definition}[Static AMCO]
A function $F$ is an asynchronously metrically contracting operator (AMCO) if for every node $i$ there exists a distance function $d_i$ such that if $D(x,y) \triangleq \max_{i} d_i(x_{i}, y_{i})$ then:
\begin{enumerate}[label=(SU\arabic*),leftmargin=1.5cm]
\item $\forall i \in \nodeSet, x, y \in \state : d_i(x,y) = 0 \Leftrightarrow x = y$
\label{ass:static-ultra-metric}
\item $\forall i \in \nodeSet : \exists n \in \nat : \forall x, y \in \state : d_i(x,y) \leq n$
\label{ass:static-ultra-bounded}
\item $\forall x \in \state : x \neq \aFun(x) \Rightarrow D(x,\aFun(x)) > D(\aFun(x),\aFun^2(x))$
\label{ass:static-ultra-contr-orbits}
\item $\forall x, x^* \in \state: \aFun(x^*) = x^* \wedge x \neq x^* \Rightarrow D(x^*,x) > D(x^*,\aFun(x))$
\label{ass:static-ultra-contr-fixed}
\item $\state$ is non-empty
\label{ass:static-ultra-nonempty}
\end{enumerate}
\end{definition}
The AMCO conditions require the construction of a notion of distance between states such that there exists a maximum distance \ref{ass:static-ultra-bounded} and that successive iterations become both closer together \ref{ass:static-ultra-contr-orbits} and closer to any fixed point \ref{ass:static-ultra-contr-fixed}. Note, unlike Gurney's original formulation, the AMCO conditions as defined above do not require $d_i$ to obey the typical metric axioms of symmetry and the ultrametric triangle inequality.

Gurney~\cite{gurney17ultrametrics} proves that the AMCO conditions are equivalent to the ACO conditions by constructing reductions in both directions. Consequently the following convergence theorem holds.
\begin{theorem}
\label{thm:metric_async_convergence}
If $F$ is an AMCO then $\async$ converges deterministically over the set~$\state$.
\end{theorem}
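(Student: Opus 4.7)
The plan is to show that the AMCO conditions imply the ACO conditions and then invoke Theorem~\ref{thm:static-aco-implies-converges}. This is essentially Gurney's reduction~\cite{gurney17ultrametrics}, but adapted to the relaxed conditions stated above. The argument proceeds in two stages: first establish the existence of a fixed point $x^*$ from the orbit-contraction condition~\ref{ass:static-ultra-contr-orbits}, then build a sequence of nested boxes $\boxSet(k)$ centred at $x^*$ using the fixed-point-contraction condition~\ref{ass:static-ultra-contr-fixed}.

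For the first stage I would pick any $x_0 \in \state$ (which exists by~\ref{ass:static-ultra-nonempty}) and consider the orbit $x_0, \aFun(x_0), \aFun^2(x_0), \ldots$. By~\ref{ass:static-ultra-contr-orbits}, the values $D(\aFun^k(x_0), \aFun^{k+1}(x_0))$ are strictly decreasing as long as $\aFun^k(x_0) \neq \aFun^{k+1}(x_0)$. Because each $d_i$ is bounded above by a natural number via~\ref{ass:static-ultra-bounded}, the quantity $D$ takes values in a well-ordered set, so after finitely many steps we must reach some $k$ with $\aFun^k(x_0) = \aFun^{k+1}(x_0)$. This common value is the desired fixed point $x^*$.

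For the second stage, let $N$ be a natural number bounding every $d_i$ (one exists by~\ref{ass:static-ultra-bounded} applied to each of the finitely many nodes), and define
\begin{equation*}
\boxSet(k)_i \triangleq \{\, y \in \state_i : d_i(x^*_i, y) \leq \max(0, N - k) \,\}.
\end{equation*}
Then $\boxSet(0) = \state$, satisfying~\ref{ass:static-aco-closed} trivially, and by~\ref{ass:static-ultra-metric} the box $\boxSet(k)$ collapses to the singleton $\{x^*\}$ as soon as $k \geq N$, satisfying~\ref{ass:static-aco-finish}. The progress condition~\ref{ass:static-aco-progress} is the heart of the reduction: given $y \in \boxSet(k)$, either $y = x^*$ so $\aFun(y) = x^* \in \boxSet(k+1)$ trivially, or~\ref{ass:static-ultra-contr-fixed} gives $D(x^*, \aFun(y)) < D(x^*, y) \leq N - k$, and because distances are integer-valued this yields $D(x^*, \aFun(y)) \leq N - k - 1$, placing $\aFun(y)$ in $\boxSet(k+1)$. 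Applying Theorem~\ref{thm:static-aco-implies-converges} to the resulting ACO then gives deterministic convergence over $\state = \boxSet(0)$.

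The main obstacle is the first stage: extracting the fixed point $x^*$ using only the orbit-contraction condition, since~\ref{ass:static-ultra-contr-fixed} is vacuous without already knowing a fixed point exists. This is also where the implicit assumption that each $d_i$ takes values in a well-ordered (discrete) co-domain is essential --- for a densely ordered bounded co-domain, a strictly decreasing sequence need not terminate, and both the existence argument and the integer step from $N - k$ to $N - k - 1$ in the progress verification would fail.
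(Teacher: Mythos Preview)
Your proposal is correct and follows essentially the same route as the paper, which simply cites Gurney's reduction~\cite{gurney17ultrametrics} and~\cite{anonymised} without spelling out details here; the detailed construction you give (iterate from a point to obtain $x^*$ via~\ref{ass:static-ultra-contr-orbits}, then set $\boxSet(k)_i$ to be the ball of radius $N-k$ about $x^*_i$) is exactly the argument the paper later writes out in full for the dynamic analogue in Theorem~\ref{thm:ultra-implies-aco}. Your flagging of the implicit assumption that the $d_i$ are $\mathbb{N}$-valued is apt and matches how the paper treats it there (``As this is a decreasing chain in~$\mathbb{N}$\ldots'').
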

\begin{proof}
See  \cite{gurney17ultrametrics} \& \cite{anonymised}. 
\end{proof}

\subsection{Motivations for a dynamic model}
\label{sec:static-drawbacks}

As discussed in the introduction, prior work applying \UD's results to ``always-on" algorithms often assumes that dynamic iterations can be viewed as a sequence of static iterations. By inspecting the definition of $\delta$, the flaw in this argument can now be formalised. Consider a dynamic iteration with nodes $\nodeSet$ in which node $i \in \nodeSet$ has sent out an update message to $j \in \nodeSet$ and then $i$ ceases participating. The new static iteration would begin immediately with participants $\nodeSet - \{ i \}$ and therefore when $j$ next activates, the static model is incapable of receiving the message from node~$i$.

Another feature lacking in the static model is the ability to reboot nodes. It is possible to represent temporary node failure in the static model by excluding it from the set of active nodes, however this still provides an unsatisfactory model as many types of failure will result in a node's state being erased (e.g.~replacing a faulty server in a data centre). In reality after such an event the node is forced to revert back to the initial state. This ``rebooting'' of a node after a temporary failure cannot be described by the existing static model. 
\section{Dynamic asynchronous iterations}
\label{sec:dynamic-model}

To overcome these shortcomings we now propose a new, more general model that can describe both dynamic and static iterations.

\subsection{Model}

Let $\nodeSet$ be the set of all the nodes that participate at some point during the dynamic iteration. $\nodeSet$ is still assumed to be finite with $n = |V|$, as the only cases in which $|V|$ could be infinite is if either an infinite number of nodes participated at the same time or an infinite amount of time has passed since the iteration began. Neither case is useful in reality. As before, we assume there exists a product state space $\state = \state_1 \times \state_2 \times .. \times \state_n$.

 In order to capture the new dynamic nature of the iteration we introduce the concept of an \emph{epoch}. An epoch is a contiguous period of time in which both the function being iterated and the set of participating nodes remain constant. The set of epochs is denoted as $\epochSet$ but as with time can be assumed to be an alias for $\mathbb{N}$.

Instead of a single function $\aFun$, we now assume that $\aFun$ is a family of indexed functions where $\aFun^{ep}$ is the function being computed in epoch $e \in \epochSet$ by participants $p \subseteq \nodeSet$. Furthermore we assume there exists a special non-participating state $\bot \in \state$.

A schedule must therefore not only track the activation of nodes and the flow of data between them but also the current epoch and the participating nodes. Given these requirements it is natural to redefine a schedule as follows:

\begin{definition}[Dynamic schedule]
A dynamic schedule is a tuple of functions $(\alpha, \beta, \eta, \pi)$ where:
\begin{itemize}
\item $\alpha : \timeSet \rightarrow 2^\nodeSet$ is the \emph{activation function}, where $\alpha(t)$ is the set of nodes which activate at time~$t$.

\item $\beta : \timeSet \times \nodeSet \times \nodeSet \rightarrow \timeSet$ is the \emph{data flow function}, where $\beta(t,i,j)$ is the time at which the information used by node $i$ at time $t$ was sent by node $j$.

\item $\eta: \timeSet \rightarrow \epochSet$ is the \emph{epoch function}, where $\eta(t)$ is the epoch at time $t$.

\item $\pi: \epochSet \rightarrow 2^\nodeSet$ is the \emph{participants function}, where $\pi(e)$ is the set of nodes participating in the computation during epoch $e$.
\end{itemize}
such that:
\begin{enumerate}[label=(DS\arabic*),leftmargin=1.5cm]
\item $\forall i,j,t : \: \beta(t+1,i,j) \leq t$ -- information only travels forward in time.
\label{ass:dynamic-sched-causality}
\item $\forall t_1,t_2 : \: t_1 \leq t_2 \Rightarrow \eta(t_1) \leq \eta(t_2)$ -- the epoch number only increases.
\label{ass:dynamic-sched-monotonicity}
\end{enumerate}
\end{definition}
\vspace{0.5em}
The additional assumption \ref{ass:dynamic-sched-monotonicity} states that epochs are monotonically increasing. Although not technically required, the assumption is convenient as it ensures that for any two points in time in the same epoch then every point between them is also in the same epoch. This assumption does not reduce the expressive power of the model, as for any non-monotonic $\eta$ it is possible to find a suitable relabelling of epochs that recovers monotonicity. Another possible assumption that might be made is that a node can only activate if it is currently participating in the iteration (i.e. $\forall t : \alpha(t) \subseteq \pi(\eta(t))$). Although the assumption is reasonable, the dynamic asynchronous state function $\async$ will be defined in such a way that it will not be required (see Definition~\ref{def:dynamic-delta}).

Given a schedule, we define some additional notation for $\rho(t)$, the set of nodes participating at time $t$, and $\aFun^t$, the function being used at time $t$:
\begin{align*}
\rho(t) & \triangleq \pi(\eta(t)) \\
F^t 	& \triangleq F^{\eta(t)\rho(t)}
\end{align*}
It is now possible to define the \emph{dynamic} asynchronous state function as follows:
\begin{definition}[Dynamic asynchronous state function]
Given an initial state~$x$ and a schedule $(\alpha, \beta, \eta, \pi)$ the dynamic state function is defined as:
\begin{equation*}
\label{def:dynamic-delta}
\async^t_i(x) = \begin{cases}
\bot_i & \text{if $i \notin \rho(t)$} \\
x_i & \text{else if $t = 0$ or $i \notin \rho(t-1)$}\\
\async^{t-1}_i(x) & \text{else if $i \notin \alpha(t)$} \\
\aFun^t_i(\async^{\beta(t,i,1)}_1(x), \ldots , \async^{\beta(t,i,n)}_n(x)) & \text{otherwise}
\end{cases}
\end{equation*}
where $\async^t_i(x)$ is the state of node $i$ at time $t$ starting from state $x$.
\end{definition}
If a node is not currently participating then it adopts its non-participating state. If it is participating at time $t$ but was not participating at the time $t-1$ then it must have just (re)joined the computation and it therefore adopts its initial state. If the node is a continuing participant and is inactive at time~$t$ then its state remains unchanged. Otherwise, if it is active at time $t$, it updates its state in accordance with the data received from the other nodes in the computation. 

Note that in the latter case at time $t$ nodes can use data from any node in $V$ rather than just the current set of participants $\rho(t)$. Hence nodes that are currently participating may end up processing messages from nodes that are no longer participating in the current epoch. Also note that this new model is a strict generalisation of the static model as the static definition of $\delta$ is immediately recovered by a schedule with the constant epoch and participants functions $\eta(t) = 0$ and $\pi(0) = V$.

\subsection{Correctness}

In order to define a notion of correctness for dynamic iterations, we first need to update the definition of a pseudocycle. It turns out that only two alterations are needed. The first is that the start and end of activation and expiry periods and consequently pseudocycles must belong to the same epoch. The second is that during a pseudocycle, only the participating nodes need to experience an activation and expiry period. An updated version of the definitions is given below with the changes underlined.

\begin{definition}[Dynamic activation period]
A period of time $[t_1,t_2]$ is a \emph{dynamic activation period} for node~$i$ if \underline{$\eta(t_1) = \eta(t_2)$} and there exists a time $t \in [t_1,t_2]$ such that $i \in \alpha(t)$.
\end{definition}

\begin{definition}[Dynamic expiry period]
A period of time $[t_1,t_2]$ is\ref{ass:static-ultra-metric} a \emph{dynamic expiry period} for node~$i$ if \underline{$\eta(t_1) = \eta(t_2)$} and for all nodes $j$ and times $t \geq t_2$ then $t_1 \leq \beta(t,i,j)$.
\end{definition}

\begin{definition}[Dynamic pseudocycle]
A period of time $[t_1,t_2]$ is a \emph{dynamic pseudocycle} if \underline{$\eta(t_1) = \eta(t_2)$} and \underline{for all nodes $i$ $\in \rho(t_1)$} there exists a time $t \in [t_1,t_2]$ such that $[t_1,t]$ is an expiry period for node $i$ and $[t,t_2]$ is an activation period for node $i$.
\end{definition}

We can now start to think what it means for a dynamic iteration to be implemented correctly. Guaranteeing that a dynamic iteration will always converge to one particular fixed point is impossible as both the underlying computation and the participants may continue to change indefinitely. Furthermore the epoch durations may be short enough that no fixed point is ever reached, even temporarily. The natural and intuitive notion in such circumstances is to say that an iteration is \emph{convergent} if whenever an epoch contains a sufficient number of pseudocycles then $\async$ will converge to a fixed point for the remainder of that epoch. Furthermore within an epoch the fixed point reached should be unique, but different epochs may have different unique fixed points.

However, we would also like to be able to reason about for which epochs and sets of participants a dynamic iteration does \emph{not} converge. For example in the case of inter-domain routing, it is known that path-vector protocols only converge if and only if the network topology is free~\cite{sobrinho17correctness}. Therefore, in the same way that we constrain convergence to some set of initial states, we also constrain convergence to some set of pairs of epochs and set of participants. We will refer to such pairs as configurations.

\begin{definition}[Convergent iteration]
A dynamic asynchronous iteration is convergent over an initial set of states ${\initialSet = \initialSet_1 \times \initialSet_2 \times \ldots \times \initialSet_n}$ and a set of configurations $\configSet \subseteq \epochSet \times 2^\nodeSet$ iff:
\begin{enumerate}
\item for every epoch and configuration $(e ,p) \in \configSet$ there exists a fixed point~$x^*_{ep}$ for $\aFun^{ep}$ and a number of iterations~$k^*_{ep}$.
\item for every initial state $x \in X$, schedule and time $t_1$ then if ${(\eta(t_1), \rho(t_1)) \in \configSet}$ and the time period $[t_1,t_2]$ contains $k^*_{\eta(t_1)\rho(t_1)}$ pseudocycles then for every time~$t_3$ such that $t_3 \geq t_2$ and $\eta(t_2) = \eta(t_3)$ then $\async^{t_3}(x) = x^*_{\eta(t_1)\rho(t_1)}$.
\end{enumerate}
\end{definition}

Having now defined what we mean for a dynamic iteration to be correct, in the next section we generalise the static ACO and AMCO conditions described in Section~\ref{sec:static-results} and prove analogous correctness theorems for them.

\section{Results}
\label{sec:convergent-results}

Before we generalise the ACO and AMCO conditions, some additional definitions are needed. As would be expected, information from non-participating nodes that is still ``in-flight'' from a previous epoch may interfere with the convergence of $\delta$ in the current epoch. Therefore a notion is needed of a state only containing information for the current set of participants. 
\begin{definition}[Accordant states]
A state $x$ is \emph{\accordant{}} with respect to a set of participants $p$ if every inactive node is assigned the inactive state, i.e. $\forall i \notin p : x_i = \bot_i$. The set of states that are \accordant{} with $p$ is denoted as $\accSet$.
\end{definition}
When in the dynamic world we also need to take more care about the properties of $\initialSet$, the set of initial states which the iteration can converge from. The static ACO conditions in Definition~\ref{def:static-aco} implicitly take the first box to be the set of initial states, i.e. $X = B(0)$, whilst the static AMCO conditions implicitly assume any state is valid, i.e. $\initialSet = \state$. However, the former approach no longer works in the dynamic world as we are forced to have different sets of boxes for each epoch and set of participants, and the latter is unnecessarily restrictive, as some iterative algorithms may only converge when started from certain states. 

In order to solve these problems,  we now define the properties that the initial set must satisfy regardless of whether we're using the ACO or AMCO conditions.
\begin{definition}[Valid set of initial states]
An initial set $\initialSet = \initialSet_1 \times \initialSet_2 \times \ldots \times \initialSet_n$ is \emph{valid} if:
\begin{enumerate}
[label=(IS\arabic*),leftmargin=1.5cm]
\item $\bot \in \initialSet$
\label{ass:initial-set-invalid}

\item $\forall e, p, x : x \in \initialSet \Rightarrow F^{ep}(x) \in \initialSet$
\label{ass:initial-set-closed}
\end{enumerate}
\end{definition}

\noindent Assumption~\ref{ass:initial-set-invalid} states that the non-participating state is in the initial set and \ref{ass:initial-set-closed} states that $\initialSet$ is closed over every operator. The latter is the counterpart of assumption~\ref{ass:static-aco-closed} in the definition of a static ACO. Together these ensure that an asynchronous iteration never leaves the initial set (see Lemma~\ref{lem:initial-set} for details). Also note that the entire state space $\state$ is trivially a valid initial set.

\subsection{Dynamic ACO implies convergent}
\label{sec:aco-implies-convergent}

We can now define the dynamic counterpart of the static ACO conditions. While it might be tempting to simply require that every $F^{ep}$ be a static ACO, there are a couple of additional constraints necessary.

\begin{definition}[Dynamic ACO]
\label{def:dynamic-aco}
The set of functions $F$ are a \emph{dynamic ACO} over a set of initial states $\initialSet = \initialSet_1 \times \initialSet_2 \times \ldots \times \initialSet_n$ and set of configurations $\configSet \subseteq \epochSet \times 2^\nodeSet$ if for every epoch~$e$ and set of participants~$p$ such that $(e , p) \in \configSet$ there exists a sequence of sets $\boxSet^{ep}(k) = \boxSet^{ep}_{1}(k) \times \boxSet^{ep}_{2}(k) \times ... \times \boxSet^{ep}_{n}(k)$ for $k \in \mathbb{N}$ such that:
\begin{enumerate}[label=(DA\arabic*),leftmargin=1.5cm]
\item $\initialSet \subseteq \boxSet^{ep}(0)$
\label{ass:dynamic-aco-first-box-eq}

\item $\forall k \in \nat, i \notin p : \bot_i \in \boxSet^{ep}_{i}(k)$
\label{ass:dynamic-aco-bot-set}

\item $\forall k \in \nat, x \in \initialSet \cap \accSet: x \in \boxSet^{ep}(k) \Rightarrow F^{ep}(x) \in \boxSet^{ep}(k+1)$
\label{ass:dynamic-aco-progress}

\item $\exists k^*_{ep} \in \nat,x^*_{ep} \in \initialSet: \forall k \in \nat: k^*_{ep} \leq k \Rightarrow \boxSet^{ep}(k) = \{x^*_{ep}\}$
\label{ass:dynamic-aco-finish}
\end{enumerate}
\end{definition}
Assumption~\ref{ass:dynamic-aco-first-box-eq} is a new assumption that links the initial boxes of each epoch together by assuming that the initial set of states is a subset of the initial box. Assumption~\ref{ass:dynamic-aco-bot-set} is also new and ensures that the box for any non-participating node contains its non-participating state. Assumptions \ref{ass:dynamic-aco-progress} \& \ref{ass:dynamic-aco-finish} are generalised versions of \ref{ass:static-aco-progress} \& \ref{ass:static-aco-finish} respectively. The only difference is that \ref{ass:dynamic-aco-progress} has been weakened so that applying $\aFun^{ep}$ only advances a box when the state is \accordant{} with the current set of participants. This means that progress need not be made in the case when stale messages are still being received from nodes that are no longer participating.

We now prove that if $\aFun$ is a dynamic ACO over a valid set of initial states~$\initialSet$ and configurations $\configSet$ then $\async$ is convergent over $\initialSet$ and $\configSet$. Going forwards the existence of some arbitrary schedule $(\alpha,\beta,\eta,\pi)$ and starting state $x \in \initialSet$ is assumed. As with $F^t \triangleq F^{\eta(t)\rho(t)}$, we use the shorthand $\boxSet^t \triangleq \boxSet^{\eta(t)\rho(t)}$ and $c(t) \triangleq (\eta(t), \rho(t))$ so that the current boxes and configuration may be indexed by time rather than by epoch and participants. Initially some auxiliary definitions are introduced in order to improve the readability of the proof.
\begin{definition}
\emph{The state of node $i$ is in box $k$ at time $t$} if:
\begin{equation*}
c(t) \in \configSet \Rightarrow \async^t_i(x) \in \boxSet^{t}_i(k)
\end{equation*}
\end{definition}
\noindent i.e. if the current configuration is in the set of valid configurations then the current state of node~$i$ is in box~$k$.

\begin{definition}
\label{def:node_messages_in_box}
\emph{The messages to node $i$ are in box $k$ at time $t$} if:
\begin{equation*}
c(t) \in \configSet \Rightarrow \forall s : (s > t) \wedge (\eta(s) = \eta(t)) \Rightarrow \forall j: \async^{\beta(s,i,j)}_j(x) \in \boxSet^{t}_j(k)
\end{equation*}
\end{definition}
\noindent i.e. if the current configuration is in the set of valid configurations then any message arriving at node $i$ after time $t$ and before the end of the current epoch is guaranteed to be in box $k$. A different way of viewing this condition is that node $i$'s local view of the global state of the iteration will be in box~$k$ for the remainder of the epoch.

\begin{definition}
\emph{The messages to node $i$ are accordant at time $t$} if:
\begin{equation*}
\forall s : (s > t) \wedge (\eta(s) = \eta(t)) \Rightarrow \forall j: j \notin \rho(s) \Rightarrow \async^{\beta(s,i,j)}_j = \bot_j
\end{equation*}
\end{definition}
\noindent i.e. any message arriving at node $i$ after time $t$ during the current epoch from a non-participating node $j$ will always contain the non-participating state $\bot_j$. This is equivalent to stating that node $i$'s local view of the state is accordant.

\begin{definition}
\label{def:node_computation_in_box}
\emph{The computation at node $i$ is in box $k$ at time $t$} if:
\begin{enumerate}
\item the state of node $i$ is in box $k$ at time $t$.
\item if $k > 0$ the messages to node $i$ are in box $k - 1$ at time $t$.
\item if $k > 0$ then the messages to node $i$ are accordant at time $t$.
\end{enumerate}
\end{definition}
\noindent This definition collects together the pre-conditions required to prove that the state of node $i$ will always be in box $k$ for the remainder of the epoch, as shown in Lemma~\ref{lem:persist-state}. Finally we lift this definition from an individual node to the whole computation as follows:
\begin{definition}
\label{def:computation_in_box}
\emph{The computation is in box $k$ at time $t$} if for all nodes $i \in \rho(t)$ then the computation at node $i$ is in box $k$ at time $t$.
\end{definition}
\noindent It is interesting to note that Definition~\ref{def:computation_in_box} does not place any requirements on non-participating nodes. This is because, by the definition of $\async$, any non-participating node $i$ is always in the non-participating state $\bot_i$, which, by assumption~\ref{ass:dynamic-aco-bot-set}, is in every one of the boxes, including the final one. Also note that all of the above definitions contain some linguistic slight of hand, as being in box~$k$ at time~$t$ and being in box~$k$ at time~$t+1$, does not necessarily refer to the same box if $\eta(t) \neq \eta(t+1)$.

The proof can now be split into four parts. The first set of \emph{closure} lemmas prove that the computation is always in box 0 even after changes in the epoch. The second set of \emph{stability} lemmas describe under what conditions after the computation reaches box $k$ it remains in that box for the remainder of the epoch. The third set of \emph{progress} lemmas demonstrate how during a pseudocycle the entire computation advances at least one box. Finally these results are combined to prove convergence.

\subsubsection{Closure lemmas}

In order to later apply the other ACO assumptions, we first establish that the initial set $\initialSet$ is closed over $\delta$, i.e. that the iteration never escapes the initial set. As a consequence of this and assumption~\ref{ass:dynamic-aco-first-box-eq}, we then prove that both the state and the computation are always in the box 0 of the current epoch.

\begin{lemma}
\label{lem:initial-set}
For any $x \in \initialSet$ and time $t$ then $\async^t(x) \in \initialSet$.
\end{lemma}
\begin{proof}
Consider an arbitrary node $i$. The proof that $\async^t_i(x) \in \initialSet_i$ proceeds by induction over the definition of $\async$. 

\case{1}{$i \notin \rho(t)$}
Then $\async^{t}_i(x) = \bot_i$ and $\bot_i \in \initialSet_i$ by assumption~\ref{ass:initial-set-invalid}.

\case{2}{$i \in \rho(t)$ and ($t = 0$ or $i \notin \rho(t-1)$)}
Then $\async^t_i(x) = x_i$ and $x_i \in \initialSet_i$ by the initial assumption.

\case{3}{$i \in \rho(t)$ and $i \in \rho(t-1)$ and $i \notin \alpha(t_2)$}
Then $\async^{t}_i(x) = \async^{t-1}_i(x)$, and $\async^{t-1}_i(x) \in \initialSet_i$ by the inductive hypothesis applied to time~$t-1$.

\case{4}{$i \in \rho(t)$ and $i \in \rho(t-1)$ and $i \in \alpha(t)$}
Then $\async^{t}_i(x) = \aFun^t_i(\async^{\beta(t,i,1)}_1(x), \ldots, \async^{\beta(t,i,n)}_n(x))$. For each $j$ then $\async^{\beta(t,i,j)}_j(x) \in \initialSet_j$ by the inductive hypothesis applied to time $\beta(t,i,j)$. Hence 
$\aFun^t_i(...) \in \initialSet_i$ as $\initialSet$ is closed under $\aFun^t$ by assumption \ref{ass:initial-set-closed} .
\end{proof}

\begin{lemma}
\label{lem:initial-state}
For every time $t$ and node $i$ the state of node $i$ is in box $0$ at time~$t$.
\end{lemma}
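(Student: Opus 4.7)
The plan is strong induction on $t$, with a case analysis matching the four clauses of the definition of $\async^t_i(x)$. The repeated tools are assumption \ref{ass:dynamic-aco-first-box-eq}, which makes the $i$-th component of the initial box independent of epoch and participant set; \ref{ass:dynamic-aco-bot-set}, which places $\bot_i$ in every box whenever $i \notin p$; \ref{ass:dynamic-aco-closed}, which gives closure of $\boxSet^{ep}(0)$ under $F^{ep}$; and \ref{ass:dynamic-sched-causality}, which lets strong induction reach the message predecessors.

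The three non-update clauses I would dispatch quickly. If $i \notin \rho(t)$ the state is $\bot_i$, and \ref{ass:dynamic-aco-bot-set} puts it in $\boxSet^t(0)_i$. If the second clause applies ($t = 0$, or $i \in \rho(t)$ with $i \notin \rho(t-1)$), the state equals $x_i$; since by hypothesis $x \in \boxSet^{0\{\}}(0)$ we get $x_i \in \boxSet^{0\{\}}(0)_i$, and \ref{ass:dynamic-aco-first-box-eq} rewrites the index from $(0,\{\})$ to $(\eta(t),\rho(t))$. If $i$ is a continuing inactive participant, the state is unchanged from time $t-1$; the induction hypothesis, combined with another application of \ref{ass:dynamic-aco-first-box-eq} to cover the possibility that $\eta(t-1) < \eta(t)$, closes the case.

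The substantive clause is when $i \in \alpha(t)$ is a continuing participant, so $\async^t_i(x) = F^t_i(\async^{\beta(t,i,1)}_1(x),\ldots,\async^{\beta(t,i,n)}_n(x))$. By \ref{ass:dynamic-sched-causality} each $\beta(t,i,j) < t$, so the strong induction hypothesis yields $\async^{\beta(t,i,j)}_j(x) \in \boxSet^{\beta(t,i,j)}(0)_j$; a further invocation of \ref{ass:dynamic-aco-first-box-eq} normalises each sender's box index to the current $(\eta(t),\rho(t))$. The whole message tuple then lies in $\boxSet^t(0)$, and \ref{ass:dynamic-aco-closed} puts $F^t$ applied to it back in $\boxSet^t(0)$, hence its $i$-th component in $\boxSet^t(0)_i$.

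The anticipated obstacle is not conceptual but bookkeeping: the $(e,p)$-index attached to each ``box $0$'' shifts every time the argument crosses an epoch boundary, steps back in time, or considers a sender $j$ with a different $\beta(t,i,j)$. Each such jump has to be discharged by an explicit appeal to \ref{ass:dynamic-aco-first-box-eq} rather than by silently identifying visually similar sets; none of the more interesting ACO axioms (\ref{ass:dynamic-aco-progress} or \ref{ass:dynamic-aco-finish}) are needed at this closure stage.
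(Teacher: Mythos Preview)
Your proposal is correct and matches the paper's proof essentially one-for-one: both argue by induction over the recursive definition of $\async$, split into the same four cases, and dispatch them with exactly the tools you name (\ref{ass:dynamic-aco-bot-set} for the non-participating case, the hypothesis $x\in\boxSet^{0\{\}}(0)$ plus \ref{ass:dynamic-aco-first-box-eq} for the (re)joining case, the inductive hypothesis plus \ref{ass:dynamic-aco-first-box-eq} for the inactive case, and the inductive hypothesis at the $\beta$-times together with \ref{ass:dynamic-aco-first-box-eq} and \ref{ass:dynamic-aco-closed} for the update case). Your explicit mention of \ref{ass:dynamic-sched-causality} to justify that the strong induction hypothesis is available at each $\beta(t,i,j)$ is a detail the paper leaves implicit, but otherwise the arguments are the same.
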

\begin{proof}
Consider an arbitrary time $t$ and node $i$ such that $c(t) \in \configSet$. Then $\async^t_i(x) \in \initialSet_i$ by Lemma~\ref{lem:initial-set} and $\initialSet_i \subseteq \boxSet^t_i(0)$ by assumption~\ref{ass:dynamic-aco-first-box-eq}.
\end{proof}

\begin{lemma}
\label{lem:initial-computation}
For every time $t$ and node $i$ the computation at node $i$ is in box $0$ at time~$t$.
\end{lemma}
\begin{proof}
The state of node $i$ is in box $0$ at time $t$ by Lemma~\ref{lem:initial-state}. As we are only considering box 0, Definition~\ref{def:node_computation_in_box} does not require us to prove that the messages to node $i$ are in box 0 and are accordant at time $t$\footnote{Note that the latter isn't even true as out-of-date messages may still be arriving from nodes that were participating in a previous epoch but are no longer participating in the current epoch.}.
\end{proof}

\subsubsection{Stability lemmas}

Guaranteeing that the dynamic iteration makes progress towards the fixed point of the current epoch is complicated by the fact that out-of-date messages may arrive from earlier in the iteration and undo recent progress. The next lemmas establish what conditions are necessary to guarantee that once the state and messages are in box~$k$ then they will remain in box~$k$ for the remainder of the epoch. 

\begin{lemma}
\label{lem:persist-state}
If the computation at node $i$ is in box $k$ at time $t$ then the state of node $i$ is in box $k$ for every time $s \geq t$ such that $\eta(s) = \eta(t)$.
\end{lemma}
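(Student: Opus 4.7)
My plan is to proceed by induction on $s$, starting from $s = t$. The base case $s = t$ is immediate from clause~1 of the hypothesis (the state of node $i$ is in box $k$ at time $t$), combined with the observation that $\boxSet^s(k)_i = \boxSet^t(k)_i$ whenever $\eta(s) = \eta(t)$, since boxes are indexed by the epoch and participant set only.

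For the inductive step, I would first use the monotonicity assumption \ref{ass:dynamic-sched-monotonicity} together with $\eta(t) \leq \eta(s-1) \leq \eta(s) = \eta(t)$ to deduce $\eta(s-1) = \eta(t)$, and hence $\rho(s-1) = \rho(s) = \rho(t)$. This lets me apply the inductive hypothesis at $s-1$ and also ensures the boxes coincide across all the relevant times. I would then case-split on the definition of $\async^s_i(x)$ exactly as in the proof of Lemma~\ref{lem:initial-state}. If $i \notin \rho(s)$ then $\async^s_i(x) = \bot_i$, which lies in $\boxSet^s(k)_i$ by \ref{ass:dynamic-aco-bot-set}. The ``rejoining'' case $i \in \rho(s)$ with $i \notin \rho(s-1)$ is ruled out because $\rho(s-1) = \rho(s)$. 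If $i \in \rho(s)$ but $i \notin \alpha(s)$, the state is inherited from time $s-1$ and the result follows directly from the inductive hypothesis.

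The interesting case is when $i \in \alpha(s)$. Here $\async^s_i(x) = \aFun^s_i(\async^{\beta(s,i,1)}_1(x), \ldots, \async^{\beta(s,i,n)}_n(x))$. The plan is to apply the progress assumption \ref{ass:dynamic-aco-progress} to the tuple of incoming messages, which will yield membership in $\boxSet^s(k)_i$. To invoke \ref{ass:dynamic-aco-progress}, I need two facts about the message tuple: that it lies in $\boxSet^s(k-1)$, and that it is accordant with $\rho(s)$. The first fact follows from clause~2 of the hypothesis (messages to $i$ are in box $k-1$ at time $t$) applied at time $s$, which is valid because $s > t$ and $\eta(s) = \eta(t)$; note that $\boxSet^t(k-1)_j = \boxSet^s(k-1)_j$. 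The second fact follows from clause~3 (messages are well formed at time $t$), which guarantees $\async^{\beta(s,i,j)}_j(x) = \bot_j$ whenever $j \notin \rho(s)$.

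The main obstacle, as always with such inductive arguments over $\async$, is keeping track of which conditions survive the passage of time. The key observation that unlocks the whole proof is that the two message conditions in the definition of ``computation at node $i$ is in box $k$ at time $t$'' are stated uniformly over all future times $s$ within the same epoch, so they remain directly available at every step of the induction without needing to be re-proved. Edge cases at $k = 0$ are not a concern here since Lemma~\ref{lem:initial-state} already establishes the result for box $0$ unconditionally, and for $k \geq 1$ the box $k-1$ referenced by the hypothesis is well defined.
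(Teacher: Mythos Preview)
Your proposal is correct and follows essentially the same approach as the paper: induction on $s$ with a case split on the four branches of the definition of $\async$, using \ref{ass:dynamic-aco-bot-set} for the non-participating case, ruling out the rejoining case via $\rho(s-1)=\rho(s)$, invoking the inductive hypothesis when inactive, and applying \ref{ass:dynamic-aco-progress} to the incoming message tuple (which is accordant and in box $k-1$ by clauses~2 and~3 of the hypothesis) when active. Your explicit appeal to \ref{ass:dynamic-sched-monotonicity} to justify $\eta(s-1)=\eta(t)$ and your remark on the $k=0$ edge case are minor expository additions but do not change the argument.
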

\begin{proof}
Assume that the computation at node $i$ is box $k$ at time $t$ for an arbitrary node~$i$ and time~$t$. We must show that $\async^{s}_i(x) \in \boxSet^s_i(k)$ for any $s \geq t$ such that $c(s) \in \configSet$. If $k = 0$ then the result follows immediately by Lemma~\ref{lem:initial-state}. Otherwise if $k > 0$ the proof proceeds by induction over time $s$ and the definition of~$\async$. If $s = t$ then the state of node~$i$ is in box~$k$ at time~$t$ by Definition~\ref{def:node_computation_in_box}. Otherwise $s > t$ and as $s-1 \in [t , s]$ and $\eta(t) = \eta(s)$ then $\eta(t) = \eta(s-1) = \eta(s)$ and hence $\boxSet^t = \boxSet^{s-1} = \boxSet^{s}$ and $c(t), c(s-1) \in \configSet$. Consider the following cases:

\case{1}{$i \notin \rho(s)$}
Then $\async^{s}_i(x) = \bot_i$ and $\bot_i \in \boxSet^s_i(k)$ by assumption~\ref{ass:dynamic-aco-bot-set}.

\case{2}{$i \in \rho(s)$ and $i \notin \rho(s-1)$}
As $\eta(s - 1) = \eta(s)$ then $\rho(s-1) = \rho(s)$, contradicting the case assumptions.

\case{3}{$i \in \rho(s)$ and $i \in \rho(s-1)$ and $i \notin \alpha(s)$}
Then $\async^{s}_i(x) = \async^{s-1}_i(x)$. As $c(s-1) \in \configSet$ then we have $\async^{s-1}_i(x) \in \boxSet^{s-1}_i(k)$ by the inductive hypothesis applied to time $s-1$. As $\boxSet^{s-1}_i(k) = \boxSet^{s}_i(k)$, we therefore have $\async^{s}_i(x) \in \boxSet^{s}_i(k)$.

\case{4}{$i \in \rho(s)$ and $i \in \rho(s-1)$ and $i \in \alpha(s)$}
Then $\async^{s}_i(x) = \aFun^s_i(\async^{\beta(s,i,1)}_1(x), \ldots, \async^{\beta(s,i,n)}_n(x))$. As $c(t) \in \configSet$ and all messages to node $i$ are in box $k-1$ at time $t$ and are accordant, then $\async^{\beta(s,i,j)}_j(x)$ is accordant and in box $\boxSet^{t}_j(k-1) = \boxSet^s_j(k-1)$ for every node $j$. Hence $\aFun^s_i(...) \in \boxSet^s_i(k)$ by assumption~\ref{ass:dynamic-aco-progress}.
\end{proof}

\begin{lemma}
\label{lem:persist-messages}
If messages to node $i$ are in box $k$ at time $t$ then the messages to node $i$ are in box $k$ for all times $s \geq t$ such that $\eta(s) = \eta(t)$.
\end{lemma}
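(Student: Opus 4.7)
The statement is essentially a monotonicity/restriction claim: the hypothesis at time $t$ quantifies over all future times in the same epoch, so pushing the reference time forward to $s$ only shrinks the set of witnesses we must exhibit. Accordingly, my plan is to proceed directly by unfolding the definition of \emph{messages to node $i$ are in box $k$}, and chase quantifiers, with the only real bookkeeping being that the box $\boxSet^{t}(k)_j$ in the hypothesis is indexed by $t$ while in the conclusion we need a box indexed by $s$.

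Concretely, assume the hypothesis holds at time $t$, and fix an arbitrary $s \geq t$ with $\eta(s) = \eta(t)$. I would fix an arbitrary $s' > s$ with $\eta(s') = \eta(s)$ and an arbitrary $j$, and aim to show $\async^{\beta(s',i,j)}_j(x) \in \boxSet^{s}(k)_j$. Combining $s' > s \geq t$ gives $s' > t$, and $\eta(s') = \eta(s) = \eta(t)$, so the hypothesis applied at time $t$ immediately yields $\async^{\beta(s',i,j)}_j(x) \in \boxSet^{t}(k)_j$.

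The remaining step is to identify $\boxSet^{t}(k)_j$ with $\boxSet^{s}(k)_j$. Since $\rho$ is defined by $\rho(u) \triangleq \pi(\eta(u))$, the equality $\eta(s) = \eta(t)$ forces $\rho(s) = \rho(t)$, hence $\boxSet^{s}(k)_j = \boxSet^{\eta(s)\rho(s)}(k)_j = \boxSet^{\eta(t)\rho(t)}(k)_j = \boxSet^{t}(k)_j$. Substituting completes the proof. There is no real obstacle here: the lemma is essentially a restatement of the hypothesis once one observes that the shorthand $\boxSet^{t}$ only depends on $\eta(t)$, so no induction (in contrast with Lemma~\ref{lem:persist-state}) and no appeal to the ACO assumptions beyond \ref{ass:dynamic-aco-first-box-eq}-style reindexing is required.
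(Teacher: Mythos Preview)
Your proof is correct and follows exactly the approach the paper takes: the paper's proof is the single line ``This is trivial by the definition of the messages to node $i$ being in box $k$ at time~$t$,'' and you have simply unpacked that triviality explicitly. One small remark: the reindexing $\boxSet^{t}(k)_j = \boxSet^{s}(k)_j$ comes purely from $\eta(s)=\eta(t)$ and the definition of the shorthand $\boxSet^{t}$, so no ACO assumption (not even \ref{ass:dynamic-aco-first-box-eq}) is actually invoked.
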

\begin{proof}
Consider a time $r > s$ such that $\eta(r) = \eta(s)$. We must show that $\async^{\beta(r,i,j)}_j(x) \in \boxSet^s_j(k)$ for every node $j$. Then $r > t$ and $\eta(r) = \eta(t)$ and so by Definition~\ref{def:node_messages_in_box} we have that $\async^\beta(r,i,j)(x)_j \in \boxSet^t_j(k)$. As $\eta(s) = \eta(t)$ then $\boxSet^s = \boxSet^t$ and hence we have the required result.
\end{proof}

\subsubsection{Progress lemmas}

Having established that i) the computation is always in box 0 no matter the epoch and ii) once the computation at node~$i$ has reached box~$k$, it remains in box~$k$, it is next necessary to establish when the computation advances a box during an epoch. These conditions are intimately tied to the notion of a pseudocycle.

\begin{lemma}
\label{lem:advance-state}
If the messages to node $i$ are accordant and are in box $k$ at time~$t$ and $[t,s]$ is an activation period then the state of node $i$ is in box $k+1$ at time~$s$.
\end{lemma}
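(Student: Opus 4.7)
The plan is to extract an activation time $t' \in [t,s]$ from the activation period, show that the update performed at $t'$ places node $i$'s state in $\boxSet^{t'}(k+1)_i$, and then propagate this forward to time $s$ via Lemma \ref{lem:persist-state}.

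By the definition of activation period there exists $t' \in [t,s]$ with $i \in \alpha(t')$, and since $\eta(t) = \eta(s)$, monotonicity of $\eta$ makes $\rho$ constant on $[t,s]$. If $i \notin \rho(t')$ then $i \notin \rho(s)$, so $\async^s_i(x) = \bot_i \in \boxSet^s(k+1)_i$ by \ref{ass:dynamic-aco-bot-set} and the claim holds immediately. Otherwise $i \in \rho(t')$ (and $i \in \rho(t'-1)$, as $t'-1$ lies in the same epoch), and by the definition of $\async$ the update at $t'$ gives $\async^{t'}_i(x) = \aFun^{t'}_i(y)$ where $y_j = \async^{\beta(t',i,j)}_j(x)$. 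Instantiating the "messages in box $k$" hypothesis at $s = t' > t$ yields $y_j \in \boxSet^t(k)_j = \boxSet^{t'}(k)_j$ for every $j$, so $y \in \boxSet^{t'}(k)$; instantiating the well-formedness hypothesis at the same value yields $y_j = \bot_j$ for every $j \notin \rho(t')$, so $y$ is accordant with $\rho(t')$. Assumption \ref{ass:dynamic-aco-progress} then gives $\aFun^{t'}(y) \in \boxSet^{t'}(k+1)$, and hence $\async^{t'}_i(x) \in \boxSet^{t'}(k+1)_i$.

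To extend the conclusion from time $t'$ to time $s$ I invoke Lemma \ref{lem:persist-state} with box index $k+1$. Its three prerequisites hold at $t'$: the state of node $i$ is in box $k+1$ at $t'$ (just shown); the messages to node $i$ are in box $k$ at $t'$ (by applying Lemma \ref{lem:persist-messages} to the hypothesis at $t$); and the messages are well-formed at $t'$ (immediate, since shrinking the quantifier range from $s > t$ to $s > t'$ preserves the implication). The lemma then delivers $\async^s_i(x) \in \boxSet^s(k+1)_i$, which is the desired conclusion.

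The main obstacle I anticipate is the corner case $t' = t$, where the "messages in box $k$" and well-formedness hypotheses (both stated for $s$ strictly greater than $t$) cannot be instantiated at the update time itself. I expect this is handled either by choosing $t' > t$ within the activation period whenever possible, or by appealing to the fact that in every downstream use of this lemma any activation occurring at the very start of $[t,s]$ is covered by persistence of the previous box through the preceding expiry period.
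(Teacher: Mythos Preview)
Your argument is correct and takes a genuinely different route from the paper. The paper proves the lemma by a fresh induction on $s$ along the cases of the definition of $\async$: at the final time $s$ it checks whether $i$ is active, and if not it appeals to the inductive hypothesis at $s-1$ (arguing that $[t,s-1]$ is still an activation period), while if $i\in\alpha(s)$ it applies \ref{ass:dynamic-aco-progress} directly to the update at $s$. Your approach instead locates one activation time $t'$, establishes the box-$(k{+}1)$ membership there, and then \emph{reuses} Lemma~\ref{lem:persist-state} to carry the conclusion forward to $s$. This is more modular: you avoid repeating the case analysis that Lemma~\ref{lem:persist-state} already performs, at the cost of having to verify that the ``computation at node $i$ is in box $k{+}1$'' package holds at~$t'$ (which you do correctly, invoking Lemma~\ref{lem:persist-messages} for the messages component). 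The paper's induction is self-contained but essentially re-derives the content of Lemma~\ref{lem:persist-state} inline.

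On the boundary case $t'=t$: your concern is legitimate, and the paper's proof shares exactly the same gap. The paper asserts without justification that ``activation periods are of non-zero length'' and, in its inductive Case~3 at $s=t+1$, claims a contradiction from $i\notin\alpha(s)$ even though the definition of activation period would permit $i\in\alpha(t)$. So neither argument handles an activation occurring precisely at the left endpoint, where the strict-inequality hypotheses on messages cannot be instantiated; in the paper's development this is tacitly absorbed by the way pseudocycles are used downstream, much as you anticipate.
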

\begin{proof}
Assume $c(s) \in \configSet$. The proof that $\async^s_i(x) \in \boxSet^s_i(k+1)$ proceeds by induction over the definition of $\delta$ and time $s$. As activation periods are of non-zero length then $s > t$ and as $s-1 \in [t , s]$ and $\eta(t) = \eta(s)$ then $\boxSet^t = \boxSet^{s-1} = \boxSet^{s}$ and $c(t), c(s-1) \in \configSet$. Consider the following cases:

\case{1}{$i \notin \rho(s)$}
Then $\async^{s}_i(x) = \bot_i$ and $\bot_i \in \boxSet^s_i(k+1)$ by assumption~\ref{ass:dynamic-aco-bot-set}.

\case{2}{$i \in \rho(s)$ and $i \notin \rho(s-1)$}
As $\eta(s - 1) = \eta(s)$ then $\rho(s-1) = \rho(s)$, contradicting the case assumptions.

\case{3}{$i \in \rho(s)$ and $i \in \rho(s-1)$ and $i \notin \alpha(s)$}
Then $\async^{s}_i(x) = \async^{s-1}_i(x)$. If $s-1 = t$ then the initial assumptions are contradicted as $i$ has not activated during the period $[t,s]$. Otherwise if $s-1 > t$  then $\async^{s-1}_i(x) \in \boxSet^{s-1}_i(k+1)$ by applying the inductive hypothesis to time $s-1$. As $\boxSet^{s-1}_i(k+1) = \boxSet^s_i(k+1)$ we have the required result.

\case{4}{$i \in \rho(s)$ and $i \in \rho(s-1)$ and $i \in \alpha(s)$}
Then $\async^{s}_i(x) = \aFun_i(\async^{\beta(s,i,1)}_1(x), \ldots, \async^{\beta(s,i,n)}_n(x))$. As $c(t) \in \configSet$ and all messages to node $i$ are in box $k$ at time $t$ and are accordant, then $\async^{\beta(s,i,j)}_j(x)$ is accordant and in box $\boxSet^{t}_j(k) = \boxSet^s_j(k)$ for every node $j$. Hence $\aFun_i(...) \in \boxSet^s_i(k+1)$ by assumption~\ref{ass:dynamic-aco-progress}.
\end{proof}

\begin{lemma}
\label{lem:advance-messages}
If the computation is in box $k$ at time $t$ and $[t,s]$ is an expiry period for node~$i$ then the messages to node $i$ are in box $k$ at time $s$.
\end{lemma}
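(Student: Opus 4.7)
The plan is to unfold the conclusion: we fix an arbitrary $s' > s$ with $\eta(s') = \eta(s)$ and an arbitrary node $j$, set $u \triangleq \beta(s',i,j)$, and aim to show $\async^{u}_j(x) \in \boxSet^{s}(k)_j$. Two basic facts pin down $u$: the expiry property for node $i$ gives $t \leq u$, while causality~\ref{ass:dynamic-sched-causality} gives $u \leq s' - 1$. Combining $t \leq u \leq s'$ with monotonicity~\ref{ass:dynamic-sched-monotonicity} of $\eta$ and the assumption $\eta(t) = \eta(s) = \eta(s')$ forces $\eta(u) = \eta(s)$, so in particular $\rho(u) = \rho(s)$ and $\boxSet^u = \boxSet^s$.

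The proof then proceeds by a case split on whether $j$ participates in the current epoch. If $j \in \rho(s)$, then $j \in \rho(t)$ since the participant set is constant within an epoch, so the hypothesis that the computation is in box $k$ at time $t$ specialises to tell us that the computation at node $j$ is in box $k$ at time $t$. Lemma~\ref{lem:persist-state} applied at time $u$ (noting $u \geq t$ and $\eta(u) = \eta(t)$) then yields $\async^u_j(x) \in \boxSet^u(k)_j = \boxSet^s(k)_j$. If instead $j \notin \rho(s)$, then $j \notin \rho(u)$, so the first clause of the definition of $\async$ gives $\async^u_j(x) = \bot_j$ directly, and $\bot_j \in \boxSet^s(k)_j$ by assumption~\ref{ass:dynamic-aco-bot-set}.

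The reasoning itself is straightforward once one sees the structure; the only potentially confusing point — and thus the main obstacle — is resisting the temptation to invoke the ``messages to node $i$ are well-formed at time $t$'' clause of the computation-in-box hypothesis. That clause would give well-formedness only for messages arriving \emph{strictly after} time $t$ at node $i$, and here we actually need information about $\async^u_j$ for $u$ possibly equal to $t$ and for $j$ arbitrary, not about $\beta$-lookups at node $i$. The cleaner route, followed above, is to observe that non-participation at node $j$ is handled immediately by the definition of $\async$, while participating nodes are handled by the stability Lemma~\ref{lem:persist-state}; no appeal to well-formedness of messages to $i$ is required.
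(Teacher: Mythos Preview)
Your proposal is correct and follows essentially the same approach as the paper: bound $\beta(s',i,j)$ below using the expiry period, case-split on whether $j$ participates, handle the non-participant case via the definition of $\async$ together with~\ref{ass:dynamic-aco-bot-set}, and handle the participant case via Lemma~\ref{lem:persist-state}. You are in fact more careful than the paper's own write-up, which leaves the epoch bookkeeping (your use of~\ref{ass:dynamic-sched-causality} and~\ref{ass:dynamic-sched-monotonicity} to force $\eta(u)=\eta(s)$ and hence $\boxSet^{u}=\boxSet^{s}$) implicit.
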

\begin{proof}
Assume that the computation is in box $k$ at time $t$ and consider two arbitrary nodes~$i$~and~$j$ and time $s$ such that $[t , s]$ is an expiry period and $c(s) \in \configSet$. We must show that for all times $r > s$ such that $\eta(s) = \eta(r)$ then $\async^{\beta(r,i,j)}_j(x) \in \boxSet^s_j(k)$. As $[t,s]$ is an expiry period then $t \leq \beta(r,i,j) < r$ and therefore $\eta(t) = \eta(\beta(r,i,j)) = \eta(r) = \eta(s)$. If $j \notin \rho(s)$ then $\async^{\beta(r,i,j)}_j(x) = \bot_j$ and $\bot_j \in  \boxSet^s_j(k)$ by assumption~\ref{ass:dynamic-aco-bot-set}. Otherwise if $j \in \rho(s)$ then $\async^{\beta(r,i,j)}_j(x) \in \boxSet^{\beta(r,i,j)}_j(k)$ by Lemma~\ref{lem:persist-state} applied to time period $[t, \beta(r,i,j)]$ and the fact that the computation at node~$j$ is in box~$k$ at time~$t$. The required result then follows as $\boxSet^{\beta(r,i,j)} = \boxSet^s$ by $\eta(\beta(r, i, j)) = \eta(s)$.
\end{proof}

\noindent Lemmas~\ref{lem:advance-state}~\&~\ref{lem:advance-messages} prove that during activation and expiry periods the state and the messages are both guaranteed to advance at least one box. The next lemma combines them to prove that during a pseudocycle the whole computation advances at least one box, i.e. during a pseudocycle the asynchronous iteration makes at least as much progress as a single step of the synchronous iteration.

\begin{lemma}
\label{lem:pseudocycle}
If the computation is in box $k$ at time $t$ and the period $[ t , s ]$ is a pseudocycle then the computation is in box $k+1$ at time $s$.
\end{lemma}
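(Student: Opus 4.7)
The plan is to unpack Definition~\ref{def:computation_in_box} at time $s$ by verifying its three clauses for every participant, reducing each to one of the already-proved stability or progress lemmas. Fix an arbitrary node $i$. Since $\eta(t) = \eta(s)$ by the definition of a pseudocycle, assumption~\ref{ass:dynamic-sched-monotonicity} gives $\rho(s) = \rho(t)$, so we may assume $i \in \rho(t)$. The pseudocycle then furnishes an intermediate time $u \in [t,s]$ such that $[t,u]$ is an expiry period for $i$ and $[u,s]$ is an activation period for $i$, and both sub-intervals lie in a single epoch.

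First I would show that the messages to node $i$ are well formed at every time $r \in [t,s]$ in this epoch. This is essentially immediate from the definition: well-formedness at time $t$ quantifies over all later times $r' > t$ within the epoch, and those are a superset of the times $r' > r$ within the epoch for any $r \geq t$. Next I would establish clause (2) of Definition~\ref{def:computation_in_box} at time $s$, namely that the messages to node $i$ are in box $k$ at time $s$. Applying Lemma~\ref{lem:advance-messages} to the expiry period $[t,u]$ (using the hypothesis that the whole computation is in box $k$ at time $t$) shows the messages to node $i$ are in box $k$ at time $u$; then Lemma~\ref{lem:persist-messages} carries this forward to time $s$.

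For clause (1), I would invoke Lemma~\ref{lem:advance-state} on the activation period $[u,s]$. Its two hypotheses are precisely that the messages to node $i$ at time $u$ are well formed (already established) and in box $k$ (just proved). The conclusion is that the state of node $i$ is in box $k+1$ at time $s$. Clause (3) is the well-formedness at time $s$, which was already observed. Since $i$ was arbitrary in $\rho(s)$, the computation is in box $k+1$ at time $s$.

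The only subtle point I anticipate is the persistence of well-formedness across the two sub-intervals, since there is no dedicated lemma for it; it has to be read off the definition directly, being careful to note that $\eta(u) = \eta(t)$ and $\eta(s) = \eta(t)$ so that the epoch-restriction in the definition is preserved. Everything else is a direct chaining of Lemmas~\ref{lem:persist-messages},~\ref{lem:advance-state} and~\ref{lem:advance-messages} along the decomposition $[t,s] = [t,u] \cup [u,s]$ supplied by the pseudocycle.
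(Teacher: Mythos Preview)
Your proposal is correct and follows essentially the same route as the paper: decompose the pseudocycle into the expiry period $[t,u]$ and the activation period $[u,s]$, use Lemma~\ref{lem:advance-messages} followed by Lemma~\ref{lem:persist-messages} for the messages, Lemma~\ref{lem:advance-state} for the state, and observe directly from the definition that well-formedness persists forward in time within the epoch. The paper's proof is structurally identical, differing only in notation (it calls the intermediate time $m$) and in being slightly terser about the persistence of well-formedness.
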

\begin{proof}
Consider an arbitrary node $i \in \rho(t)$. As $[t , s]$ is a pseudocycle then as $i \in \rho(t)$ there exists a time $r$ such that $[t,r]$ is an expiry period for node $i$ and $[r,s]$ is an activation period for node $i$.
\begin{itemize}
\item As the messages to node $i$ are accordant at time $t$ then they are also accordant at times $r$ and $s$.

\item As $[t,r]$ is an expiry period and the computation is in box $k$ at time $t$, then the messages to node $i$ are in box $k$ at time $r$ by Lemma~\ref{lem:advance-messages}, and also therefore at time $s$ by Lemma~\ref{lem:persist-messages}.

\item As $[r,s]$ is an activation period and the messages to node $i$ are accordant and in box $k$ at time $r$ (by the previous two points) then the state of node~$i$ in box~$k+1$ at time $s$ by Lemma~\ref{lem:advance-state}.
\end{itemize}
Consequently all three requirements for the computation at node $i$ being in box $k+1$ at time $s$ are fulfilled.
\end{proof}

\subsubsection{Convergence}

Now that Lemma~\ref{lem:pseudocycle} has established that during a pseudocycle the whole computation advances one box, the main theorem may be proved.

\begin{theorem}
\label{thm:aco-implies-convergent}
If $\aFun$ is a dynamic ACO over a valid initial set $\initialSet$ and configurations $\configSet$ then $\async$ is convergent over $\initialSet$ and $\configSet$.
\end{theorem}
\begin{proof}
To prove that $\async$ is convergent it is first necessary to construct a fixed point $x^*_{ep}$ and iteration number $k^*_{ep}$ for every epoch $e$ and set of participants~$p$ such that $(e,p) \in \configSet$. Let these be the~$x^*_{ep}$ and~$k^*_{ep}$ respectively as specified by assumption~\ref{ass:dynamic-aco-finish}.

Next consider an arbitrary schedule, starting state $x \in \initialSet$ and starting time~$t_1$ in epoch $e = \eta(t_1)$ with participants $p = \rho(t_1)$ such that $(e, p) \in \configSet$.  We must show that for all times $t_2$ if $[t_1,t_2]$ contains $k^*_{ep}$ pseudocycles then for all times $t_3$ such that $t_3 \geq t_2$ and $\eta(t_3) = \eta(t_2)$ then $\async^t_3(x) = x^*_{ep}$. 

The computation is always in box 0 by Lemma~\ref{lem:initial-computation}. Consequently after $k^*_{ep}$ pseudocycles, the computation is in box~$k^*_{ep}$ at time $t_2$ by repeated application of Lemma~\ref{lem:pseudocycle}. Hence for any subsequent time~$t_3$ in epoch $e$, then $\async^{t_3}(x) \in \boxSet^{ep}(k^*_{ep})$ by Lemma~\ref{lem:persist-state} and, as $x^*_{ep}$ is the only state in $\boxSet^{ep}(k^*_{ep})$ by assumption~\ref{ass:dynamic-aco-finish}, then $\async^{t_3}(x) = x^*_{ep}$.
\end{proof}

\subsection{Dynamic AMCO implies convergent}

Although the dynamic ACO conditions are sufficient to guarantee convergence, they can be a tricky to construct in practice. As discussed previously in Section~\ref{sec:static-results}, the AMCO conditions are often easier to work with. This section defines the dynamic AMCO conditions and shows that they also guarantee the iteration is convergent by constructing a reduction from the dynamic AMCO conditions to the dynamic ACO conditions. 

\begin{definition}[Dynamic AMCO]
The set of functions $F$ are a \emph{dynamic AMCO} over a set of initial states $\initialSet = \initialSet_1 \times \initialSet_2 \times ... \times \initialSet_n$ and a set of configurations $\configSet \subseteq \epochSet \times 2^\nodeSet$ if for every epoch~$e$ and set of participants $p$ such that $(e, p) \in \configSet$ and for every node $i \in \nodeSet$ there exists a distance function $d^{ep}_i$ such that if $D^{ep}(x,y) \triangleq \max_{i \in p} \: d^{ep}_i(x,y)$ then:
\begin{enumerate}[resume, label=(DU\arabic*),leftmargin=1.5cm]
\item $\forall i \in \nodeSet: \forall x,y \in S: d^{ep}_i(x,y) = 0 \Leftrightarrow x = y$
\label{ass:dynamic-ultra-eq}
\item $\forall i \in \nodeSet: \exists n : \forall x,y \in S : d^{ep}(x,y)_i \leq n$
\label{ass:dynamic-ultra-bounded}
\item $\forall x {\in} \initialSet {\cap} \accSet : \aFun^{ep}(x) \neq x \Rightarrow D^{ep}(\aFun^{ep}(x),(\aFun^{ep})^2(x)) {<} D^{ep}(x,\aFun^{ep}(x))$
\label{ass:dynamic-ultra-contr-orbits}
\item $\forall x {\in} \initialSet {\cap} \accSet, x^* {\in} \state{:} \aFun^{ep}(x^*) {=} x^* \wedge x {\neq} x^* {\Rightarrow} D^{ep}(x^*,\aFun^{ep}(x)) {<} D^{ep}(x^*,x)$
\label{ass:dynamic-ultra-contr-fixed}
\item $\forall x {\in} \initialSet {\cap} \accSet: F^{ep}(x) \in \accSet$.
\label{ass:dynamic-ultra-not-participating}
\end{enumerate}
\end{definition}
\noindent Again assumptions $(DU1)$ -- $(DU4)$ are generalisations of $(SU1)$ -- $(SU4)$. The crucial difference is that everything is restricted to the set of participants: ~$F^{ep}$ need only be strictly contracting over \accordant{} states $\accSet$, and the distance functions $D^{ep}$ are defined as the maximum over all participating states. Note that the static assumption \ref{ass:static-ultra-nonempty} that $\state$ is non-empty is not needed as the dynamic model assumes the existence of the non-participating state $\bot \in \state$. Instead the new assumption~\ref{ass:dynamic-ultra-not-participating} ensures that the operator $\aFun$ preserves accordant enforces that non-participating nodes adopt the non-participating state. This assumption was not stated explicitly in the dynamic ACO conditions but can be derived from assumptions \ref{ass:dynamic-aco-bot-set} and \ref{ass:dynamic-aco-progress}.

The proof that these conditions imply that the iteration is convergent is a generalisation of the proof in~\cite{anonymised} which in turn was based off the work in~\cite{gurney17ultrametrics}. The main thrust of the reduction to the dynamic ACO conditions is relatively simple. As $\aFun^{ep}$ is strictly contracting on orbits \& its fixed points, it possesses a fixed point $x^*$. As all distances are bounded above by some value, which we will call $k^*$, the box $\boxSet^{ep}_i(k)$ can then  be defined as the set of the states which are at a distance of no more than $k^* - k$ from $x^*_i$. This is now fleshed out in more detail.

\begin{theorem}
\label{thm:ultra-implies-aco}
If $\aFun$ is a dynamic AMCO then $\aFun$ is a dynamic ACO.
\end{theorem}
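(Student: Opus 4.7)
The plan is to build, for each epoch $e$ and participant set $p$, a unique fixed point $x^*_{ep}$ of $\aFun^{ep}$ inside $\accSet$, and then to define the nested boxes $\boxSet^{ep}(k)$ as metric balls of shrinking radius around $x^*_{ep}$, with a carefully chosen coarse $\boxSet^{ep}(0)$ so that \ref{ass:dynamic-aco-first-box-eq} is satisfied without interfering with the contraction cascade.

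To construct $x^*_{ep}$, I would start from the fully non-participating state $\bot$ (which lies in $\accSet$) and iterate $\aFun^{ep}$, forming the orbit $y_k = (\aFun^{ep})^k(\bot)$. By \ref{ass:dynamic-ultra-not-participating} every $y_k$ is \accordant{}, and by \ref{ass:dynamic-ultra-contr-orbits} the sequence $D^{ep}(y_k, y_{k+1})$ is strictly decreasing while $y_k \neq_p y_{k+1}$. Because distances are natural-number valued and bounded above via \ref{ass:dynamic-ultra-bounded}, the orbit must stabilise at some $y_{k^\dagger}$ with $y_{k^\dagger} =_p \aFun^{ep}(y_{k^\dagger})$; accordance then promotes this to full equality, yielding the fixed point $x^*_{ep}$. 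Uniqueness within $\accSet$ follows from \ref{ass:dynamic-ultra-contr-fixed}: any other accordant fixed point $y^*$ with $y^* \neq_p x^*_{ep}$ would force $D^{ep}(x^*_{ep}, y^*) > D^{ep}(x^*_{ep}, \aFun^{ep}(y^*)) = D^{ep}(x^*_{ep}, y^*)$, which is absurd.

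Let $d^{ep}_{\max} \triangleq \max_{i} d^{ep}_{\max, i}$ where $d^{ep}_{\max, i}$ is the bound supplied by \ref{ass:dynamic-ultra-bounded}. I would then define
\begin{align*}
\boxSet^{ep}(0)_i & \triangleq \state_i, \\
\boxSet^{ep}(k)_i & \triangleq \{y \in \state_i : d^{ep}_i((x^*_{ep})_i, y) \leq \max(d^{ep}_{\max} - k,\, 0)\} && \text{if } k \geq 1,\ i \in p, \\
\boxSet^{ep}(k)_i & \triangleq \{\bot_i\} && \text{if } k \geq 1,\ i \notin p.
\end{align*}
With this choice, \ref{ass:dynamic-aco-closed}, \ref{ass:dynamic-aco-first-box-eq} and \ref{ass:dynamic-aco-bot-set} are immediate, and \ref{ass:dynamic-aco-finish} holds with $k^*_{ep} = d^{ep}_{\max}$ once one observes that $(x^*_{ep})_i = \bot_i$ for $i \notin p$, since $x^*_{ep} \in \accSet$. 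The substantive case is \ref{ass:dynamic-aco-progress}: given \accordant{} $x \in \boxSet^{ep}(k)$, for $i \notin p$ we have $\aFun^{ep}(x)_i = \bot_i$ by \ref{ass:dynamic-ultra-not-participating}; for $i \in p$, either $x =_p x^*_{ep}$ (and then $\aFun^{ep}(x) = x^*_{ep}$ by accordance), or \ref{ass:dynamic-ultra-contr-fixed} gives $D^{ep}(x^*_{ep}, \aFun^{ep}(x)) < D^{ep}(x^*_{ep}, x) \leq d^{ep}_{\max} - k$ which, by integrality, yields $d^{ep}_i((x^*_{ep})_i, \aFun^{ep}(x)_i) \leq d^{ep}_{\max} - (k+1)$ for every $i \in p$.

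The main obstacle I anticipate is reconciling \ref{ass:dynamic-aco-first-box-eq}, which insists $\boxSet^{ep}(0)$ be independent of $e$ and $p$, with the distance-based recipe, which naturally wants to contract around the $ep$-dependent fixed point from step zero. Taking $\boxSet^{ep}(0) = \state$ resolves this, but creates an abrupt jump into the shrinking cascade at $k = 1$; that jump is precisely what the uniform upper bound of \ref{ass:dynamic-ultra-bounded} is there to license, as it guarantees every \accordant{} state is already within $d^{ep}_{\max}$ of $x^*_{ep}$ so that \ref{ass:dynamic-ultra-contr-fixed} can immediately pull $\aFun^{ep}(x)$ into $\boxSet^{ep}(1)$.
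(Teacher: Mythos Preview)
Your proposal is correct and follows essentially the same route as the paper: iterate $\aFun^{ep}$ from $\bot$ and use \ref{ass:dynamic-ultra-contr-orbits} together with integer-valued boundedness to obtain a fixed point, establish its uniqueness via \ref{ass:dynamic-ultra-contr-fixed}, set $\boxSet^{ep}(0)=\state$ to satisfy \ref{ass:dynamic-aco-first-box-eq}, and for $k\geq 1$ take metric balls around $x^*_{ep}$ on participating coordinates and $\{\bot_i\}$ on non-participating ones. The only cosmetic difference is the radius indexing: the paper uses radius $d_{\max}-(k-1)$ for $\boxSet(k)$ with $k\geq 1$, whereas you use $\max(d_{\max}-k,0)$; your clamp has the minor advantage of making the singleton boxes for $k\geq d_{\max}$ explicit, while the paper's shift means $\boxSet(1)$ already equals $\accSet$ so the ``jump'' from $\boxSet(0)$ to $\boxSet(1)$ is only in the non-participating coordinates.
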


\begin{proof}
Consider an epoch $e$ and set of participants $p$ such that ${(e , p) \in \configSet}$. First we prove that $\aFun^{ep}$ has a fixed point. We start by constructing the chain:
\begin{equation*}
\bot,\ \aFun^{ep}(\bot),\ (\aFun^{ep})^2(\bot),\ (\aFun^{ep})^3(\bot),\ ... 
\end{equation*}
By assumption \ref{ass:dynamic-ultra-not-participating} and the fact $\bot$ is trivially accordant, then we have that every element in the chain is in $\accSet$. Similarly by assumptions~\ref{ass:initial-set-invalid}~\&~\ref{ass:initial-set-closed} we have every element in the chain is in $\initialSet$. Therefore while $(\aFun^{ep})^k(\bot) \neq (\aFun^{ep})^{k+1}(\bot)$ then by assumption~\ref{ass:dynamic-ultra-contr-orbits} the distance between consecutive elements must strictly decrease:
\begin{equation*}
D(\bot,\aFun^{ep}(\bot)) > D(\aFun^{ep}(\bot),(\aFun^{ep})^2(\bot)) > D((\aFun^{ep})^2(\bot),(\aFun^{ep})^3(\bot)) > ...
\end{equation*}
As this is a decreasing chain in $\mathbb{N}$ it must eventually reach a $k$ such that $D(\aFun^k(\bot),\aFun^{k+1}(\bot)) = 0$. Therefore $\aFun^k(\bot) \neq \aFun^{k+1}(\bot)$ by \ref{ass:dynamic-ultra-eq} and hence $x^* = \aFun^k(\bot)$ is a fixed point and furthermore $x^* \in \initialSet$ and $x^* \in \accSet$.

By assumption~\ref{ass:dynamic-ultra-bounded} there exists an upper bound on the distance function $d_i^{ep}$ for all nodes $i$, which we will denote as $k^*$. Having established the existence of the fixed point $x^*$ and maximum distance $k^*$, we can now define $i^{th}$ component for the $k^{th}$ box as follows:
\begin{equation*}
\boxSet^{ep}_i(k) \triangleq \begin{cases}
\state_i & \text{if $k = 0$} \\
\{ \bot_i \} & \text{if $k \neq 0 \wedge i \notin p$} \\
\{ x_i \mid d_i(x_i,x^*_i) \leq \max(0, k^* - k) \} & \text{if $k \neq 0 \wedge i \in p$}
\end{cases}
\end{equation*}
We must now verify that the boxes $\boxSet^{ep}$ fulfil the conditions in Definition~\ref{def:dynamic-aco}:
\begin{enumerate}
\item \ref{ass:dynamic-aco-first-box-eq} -- $X \subseteq \boxSet^{ep}(0)$

Immediate from the first case of the definition of $\boxSet^{ep}$.

\item \ref{ass:dynamic-aco-bot-set} -- $\forall k \in \nat, i \notin p: \bot_i \in \boxSet^{ep}_{i}(k)$

Immediate from the first and second cases of the definition of $\boxSet^{ep}$.

\item \ref{ass:dynamic-aco-progress} -- $\forall k \in \nat, x \in \initialSet \cap \accSet: x \in \boxSet^{ep}(k) \Rightarrow \aFun^{ep}(x) \in \boxSet^{ep}(k+1)$

Consider some state $x \in \initialSet \cap \accSet$ and also assume that $x \in \boxSet^{ep}(k)$. We must show that for all nodes $i$ then $\aFun^{ep}_i(x) \in \boxSet^{ep}_i(k+1)$. 

If $i \notin p$ then $x_i = \bot_i$ by $x_i\in \boxSet^{ep}_{i}(k)$, and hence $\aFun^{ep}_i(x_i) = \bot_i$ by assumption~\ref{ass:dynamic-ultra-not-participating}, and so $\aFun^{ep}_i(x_i) \in \boxSet_{i}(k+1)$. Otherwise if $i \in p$ it remains to show that $d^{ep}_i(x^*_i,\aFun^{ep}_i(x)) \leq \max(0, k^* - (k + 1))$.

If $x = x^*$ then:
\begin{align*}
d^{ep}_i(x^*_i , \aFun^{ep}_i(x)) 
&= d^{ep}_i(x^*_i, \aFun^{ep}_i(x^*)) & \text{(as $x = x^*$)}\\
&= d^{ep}_i(x^*_i , x^*_i) & \text{(as $F^{ep}(x^*) = x^*$)} \\
&= 0 & \text{(by \ref{ass:dynamic-ultra-eq})} \\
&\leq \max(0, k^* - (k+1))
\end{align*}
Otherwise if $x \neq x^*$ then $d^{ep}_i(x^*_i , \aFun^{ep}_i(x)) < \max(0, k^* - k)$ as:
\begin{align*}
d^{ep}_i(x^*_i , \aFun^{ep}_i(x))
& \leq D^{ep}(x^*, \aFun^{ep}(x)) & \text{(by definition of $D$)} \\
& < D^{ep}(x^*, x) & \text{(by \ref{ass:dynamic-ultra-contr-fixed} \& \ref{ass:dynamic-ultra-not-participating})} \\
& \leq \max(0, k^* - k) & \text{(by $x \in \boxSet(k)$)}
\end{align*}
which implies that $d^{ep}_i(x^*_i , \aFun^{ep}_i(x)) \leq \max(0, k^* - (k + 1))$.

\item \ref{ass:dynamic-aco-finish} -- $\exists k^*, x^* : \forall k : k^* \leq k \Rightarrow \boxSet(k) = \{ x^*\}$

We have already established the existence of $k^*$ and $x^*$. Consider a $k \geq k^*$. If $k = 0$ then $k^* = 0$ and so the result holds trivially as $x^* \in \state$ and all points are 0 distance away from and hence equal to $x^*$ by \ref{ass:dynamic-ultra-eq} \& \ref{ass:dynamic-ultra-bounded}. Otherwise if $k \neq 0$ then as $k \geq k^*$ the definition of $\boxSet^{ep}(k)$ simplifies to:
\begin{equation*}
\boxSet^{ep}_i(k) \triangleq \begin{cases}
\{ \bot_i \} & \text{if $i \notin p$} \\
\{ x_i \mid d^{ep}_i(x_i,x^*_i) = 0 \} & \text{if $i \in p$}
\end{cases}
\end{equation*}

To prove that $x^* \in \boxSet^{ep}(k)$ we must show that for every node $i$ we have $x^*_i \in \boxSet^{ep}_i(k)$. This follows as if $i \notin p$ then as $x^*_i = \bot_i$ as $x^* \in \accSet$. Otherwise if $i \in p$ then $d^{ep}(x^*_i, x^*_i) = 0$ by \ref{ass:dynamic-ultra-eq}.

Suppose there exists another state $x \in \boxSet^{ep}(k)$. We must show that for every node $i$ we have that $x_i = x^*_i$. If $i \notin p$ then $x_i \in \boxSet^{ep}_i(k)$ implies $x_i = \bot_i$ and $\bot_i = x^*_i$ as $x^* \in \accSet$. Otherwise if $i \in p$ then $x_i \in \boxSet^{ep}_{i}(k)$ implies $d^{ep}(x_i,x^*_i) = 0$. Hence $x_i = x^*_i$ by \ref{ass:dynamic-ultra-eq}.

\end{enumerate}
Hence the conditions are satisfied and $\aFun$ is a dynamic ACO.
\end{proof}

\begin{theorem}
\label{thm:ultra-implies-convergent}
If $\aFun$ satisfies the dynamic AMCO conditions then $\async$ is convergent.
\end{theorem}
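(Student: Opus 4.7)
The plan is to obtain this theorem as an immediate corollary by composing the two preceding results. Specifically, Theorem~\ref{thm:ultra-implies-aco} shows that any dynamic AMCO $\aFun$ is also a dynamic ACO, and Theorem~\ref{thm:aco-implies-convergent} shows that any dynamic ACO is convergent over $\boxSet^{0\{\}}(0)$. Chaining these two implications directly yields that $\async$ is convergent, so no new inductive argument over $\async$, $\alpha$, $\beta$, $\eta$ or $\pi$ is required here --- all of the asynchronous reasoning has already been carried out in Section~\ref{sec:aco-implies-convergent}.

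The only small point that needs to be spelled out is the set over which convergence is guaranteed. Theorem~\ref{thm:aco-implies-convergent} states convergence over $\boxSet^{0\{\}}(0)$, which depends on the particular family of boxes used to witness the ACO structure. In the proof of Theorem~\ref{thm:ultra-implies-aco} the zeroth box is defined componentwise by $\boxSet^{ep}(0)_i \triangleq \state_i$ for every epoch $e$ and participants $p$, so $\boxSet^{0\{\}}(0) = \state_1 \times \dots \times \state_n = \state$. Hence the convergence inherited from Theorem~\ref{thm:aco-implies-convergent} is in fact convergence over the whole state space $\state$, which matches the global form of the statement (there is no restricted basin of attraction to track).

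I do not expect any obstacle here: the only thing to verify is that the fixed points $x^*_{ep}$ and iteration counts $k^*_{ep}$ produced by Theorem~\ref{thm:aco-implies-convergent} really are the ones we want to associate with the AMCO $\aFun$, but since Theorem~\ref{thm:aco-implies-convergent} simply picks them out of the ACO assumption~\ref{ass:dynamic-aco-finish} and the ACO constructed in Theorem~\ref{thm:ultra-implies-aco} uses the unique fixed point $x^*_{ep}$ of $\aFun^{ep}$ together with $k^*_{ep} = d^{ep}_{\max}$, this is automatic. The proof can therefore be written in essentially two lines: invoke Theorem~\ref{thm:ultra-implies-aco}, then invoke Theorem~\ref{thm:aco-implies-convergent}, and observe the equality $\boxSet^{0\{\}}(0) = \state$.
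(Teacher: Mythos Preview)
Your proposal is correct and follows essentially the same two-line approach as the paper, which simply chains Theorem~\ref{thm:ultra-implies-aco} (AMCO $\Rightarrow$ ACO) with Theorem~\ref{thm:aco-implies-convergent} (ACO $\Rightarrow$ convergent). Your additional observation that the constructed boxes satisfy $\boxSet^{0\{\}}(0) = \state$, so that convergence holds over the entire state space, is a useful clarification that the paper leaves implicit.
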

\begin{proof}
As $\aFun$ is a dynamic AMCO then $\aFun$ is a dynamic ACO by Theorem~\ref{thm:aco-implies-convergent}. Hence $\async$ is convergent by Theorem~\ref{thm:ultra-implies-aco}.
\end{proof}
\section{Formalisation in Agda}
\label{sec:agda}

Every result presented in this paper have been formalised in Agda~\cite{norell09agda}, a dependently typed language that is expressive enough that both programs and proofs may be written in it. The results cover not only the dynamic model but also include the previous static model as well. The proofs are available online~\cite{daggitt18library} as an Agda library and the library's documentation contains a guide to definitions and proofs to the corresponding Agda code.

It is hoped that the library may be of use to others in constructing formal proofs of correctness for a variety of asynchronous algorithms. The library is designed in a modular fashion so that users need not be aware of the underlying details. The library has already been used to generalise and formally verify the correctness conditions for inter-domain routing protocols with complex conditional policy languages found in~\cite{daggitt18convergence}.
\section{Conclusion}
\label{sec:conclusion}

This paper has successfully constructed a more general model for dynamic asynchronous iterations in which both the computation and the set of participants may change over time. It has generalised the ACO and AMCO conditions for the existing static model and shown that the generalisations are sufficient to guarantee the correctness of the dynamic model.

Although we have not directly explored any uses of these results in this paper, we refer interested readers to~\cite{daggitt2021formally} which contains an in-depth case study of how they may be applied to prove new theoretical results about the Bellman-Ford family of routing protocols.

There are still several open questions in regards to the theory of asynchronous iterations. For example, even in the static model questions remain about what are necessary conditions for $\delta$ to converge. \UD{}~\cite{uresin90parallel} showed that when $\state$ is finite then the ACO conditions are both necessary and sufficient for convergence. As far as the authors are aware there exist no such corresponding conditions for the case when $\state$ is infinite. 

Another obvious question is whether the dynamic ACO conditions are also necessary for the convergence of the dynamic model when $\state$ is finite. \UD{}'s static proof is essentially combinatorial in nature, building the ACO boxes $\boxSet$ such that they contain all possible states that can result from static schedules. The challenges to adapting this to the dynamic model are twofold: firstly the additional combinatorial explosion of possible states introduced by the epochs, and secondly the absence in the definition of a dynamic schedule of \UD{}'s assumption that the schedules must contain an infinite number of pseudoperiods.

\section*{Acknowledgements}

\noindent Matthew L. Daggitt was supported by an EPSRC Doctoral Training grant.

\bibliography{paper}

\begin{thebibliography}{10}
\expandafter\ifx\csname url\endcsname\relax
  \def\url#1{\texttt{#1}}\fi
\expandafter\ifx\csname urlprefix\endcsname\relax\def\urlprefix{URL }\fi
\expandafter\ifx\csname href\endcsname\relax
  \def\href#1#2{#2} \def\path#1{#1}\fi

\bibitem{frommer00asynchronous}
A.~Frommer, D.~B. Szyld, On asynchronous iterations, Journal of computational
  and applied mathematics 123~(1) (2000) 201--216 (2000).

\bibitem{chau06routing}
C.-k. Chau, Policy-based routing with non-strict preferences, SIGCOMM Computer
  Communication Review 36~(4) (2006) 387--398 (Aug. 2006).
\newblock \href {https://doi.org/10.1145/1151659.1159957}
  {\path{doi:10.1145/1151659.1159957}}.

\bibitem{daggitt18convergence}
M.~L. Daggitt, A.~J.~T. Gurney, T.~G. Griffin, Asynchronous convergence of
  policy-rich distributed {B}ellman-{F}ord routing protocols, in: SIGCOMM
  proceedings, ACM, 2018 (2018).

\bibitem{duco03pathalgebra}
B.~Ducourthial, S.~Tixeuil, Self-stabilization with path algebra, Theoretical
  Computer Science 293~(1) (2003) 219 -- 236 (2003).
\newblock \href {https://doi.org/https://doi.org/10.1016/S0304-3975(02)00238-4}
  {\path{doi:https://doi.org/10.1016/S0304-3975(02)00238-4}}.

\bibitem{edwards03semantics}
S.~A. Edwards, E.~A. Lee, The semantics and execution of a synchronous
  block-diagram language, Science of Computer Programming 48~(1) (2003) 21 --
  42 (2003).
\newblock \href {https://doi.org/https://doi.org/10.1016/S0167-6423(02)00096-5}
  {\path{doi:https://doi.org/10.1016/S0167-6423(02)00096-5}}.

\bibitem{ko08ptp}
S.~Y. Ko, I.~Gupta, Y.~Jo, A new class of nature-inspired algorithms for
  self-adaptive peer-to-peer computing, ACM Transactions on Autonomous and
  Adaptive Systems 3~(3) (2008) 11:1--11:34 (Aug. 2008).
\newblock \href {https://doi.org/10.1145/1380422.1380426}
  {\path{doi:10.1145/1380422.1380426}}.

\bibitem{chau05numerical}
M.~Chau, Algorithmes parall{\`e}les asynchrones pour la simulation
  num{\'e}rique, Ph.D. thesis, Institut National Polytechnique de Toulouse
  (2005).

\bibitem{wolfson2019modeling}
J.~Wolfson-Pou, E.~Chow, Modeling the asynchronous {J}acobi method without
  communication delays, Journal of Parallel and Distributed Computing 128
  (2019) 84--98 (2019).

\bibitem{magoules2021asynchronous}
F.~Magoul{\`e}s, Q.~Zou, Asynchronous time-parallel method based on {L}aplace
  transform, International Journal of Computer Mathematics 98~(1) (2021)
  179--194 (2021).

\bibitem{magoules2018asynchronous}
F.~Magoul{\`e}s, G.~Gbikpi-Benissan, Q.~Zou, Asynchronous iterations of
  parareal algorithm for option pricing models, Mathematics 6~(4) (2018) 45
  (2018).

\bibitem{spiteri2020parallel}
P.~Spiteri, Parallel asynchronous algorithms: A survey, Advances in Engineering
  Software 149 (2020) 102896 (2020).

\bibitem{bahi2007parallel}
J.~M. Bahi, S.~Contassot-Vivier, R.~Couturier, Parallel iterative algorithms:
  from sequential to grid computing, CRC Press, 2007 (2007).

\bibitem{bertsekas89survey}
D.~P. Bertsekas, J.~N. Tsitsiklis, et~al., A survey of some aspects of parallel
  and distributed iterative algorithms, Tech. rep. (1989).

\bibitem{baudet76asynchronous}
G.~M. Baudet, Asynchronous iterative methods for multiprocessors, Tech. rep.,
  Department of Computer Science, Carnegie-Mellon University, Pittsburgh
  (1976).

\bibitem{uresin86generalized}
A.~Uresin, M.~Dubois, Generalized asynchronous iterations, in: International
  Conference on Parallel Processing, Springer, 1986, pp. 272--278 (1986).

\bibitem{anonymised}
M.~L. Daggitt, R.~Zmigrod, T.~G. Griffin, A relaxation of \"{U}resin \&
  {D}ubois' asynchronous fixed-point theory in {A}gda, Journal of Automated
  Reasoning (2019).

\bibitem{uresin90parallel}
A.~{\"U}resin, M.~Dubois, Parallel asynchronous algorithms for discrete data,
  Journal of the ACM (JACM) 37~(3) (1990) 588--606 (1990).

\bibitem{milnor85attractor}
J.~Milnor, On the concept of attractor, in: The theory of chaotic attractors,
  Springer, 1985, pp. 243--264 (1985).

\bibitem{gurney17ultrametrics}
A.~J.~T. Gurney, \href{https://arxiv.org/abs/1701.07434}{Asynchronous
  iterations in ultrametric spaces}, Tech. rep. (2017).
\newline\urlprefix\url{https://arxiv.org/abs/1701.07434}

\bibitem{sobrinho17correctness}
J.~L. Sobrinho, Correctness of routing vector protocols as a property of
  network cycles, IEEE/ACM Transactions on Networking 25~(1) (2017) 150--163
  (2017).

\bibitem{norell09agda}
U.~Norell, Dependently typed programming in {A}gda, in: Proceedings of the 4th
  International Workshop on Types in Language Design and Implementation, 2009
  (2009).

\bibitem{daggitt18library}
M.~L. Daggitt, R.~Zmigrod, T.~G. Griffin,
  \href{https://github.com/MatthewDaggitt/agda-routing/tree/jpdc2020}{Routing
  library} (2020).
\newline\urlprefix\url{https://github.com/MatthewDaggitt/agda-routing/tree/jpdc2020}

\bibitem{daggitt2021formally}
M.~L. Daggitt, T.~G. Griffin, Formally verified convergence of policy-rich dbf
  routing protocols (2021).
\newblock \href {http://arxiv.org/abs/2106.01184} {\path{arXiv:2106.01184}}.

\end{thebibliography}
\end{document}